\newcommand{\norm}[1]{\left\lVert#1\right\rVert}
\newcommand{\HH}{\rm{H}}
\newcommand{\mf}{ \mathbf }
\newcommand{\TT}{\rm{T}}
\DeclareMathOperator{\E}{\rm{E}}
\newcommand{\quotes}[1]{``#1''}
\newtheorem{thm}{Theorem}
\newtheorem{corollary}{Corollary}
\newtheorem{lemma}{Lemma}
\newtheorem{remark}{Remark}
\begin{document}
	
\title{Uplink Achievable Rate in One-bit Quantized Massive MIMO with Superimposed Pilots}
\author{M.A.~Teeti,~\IEEEmembership{Member,~IEEE,} Rui~Wang,~\IEEEmembership{Member,~IEEE,} and Reza~Abdolee,~\IEEEmembership{Member,~IEEE,}%
	
	\thanks{M. Teeti and R. Wang are with the Department of Electrical and Electronic Engineering, Southern University of Science \& Techology, Shenzhen, 518055, China (e-mail:moha@sustc.edu.cn, wang.r@sustc.edu).}%
	\thanks{R. Abdolee is with  the Department of Computer \& Electrical and Computer Science, California State University, Bakersfield, 93311 California (email: rabdolee@csub.edu).}}

\maketitle


\begin{abstract}
 In this work, we consider a 1-bit quantized massive MIMO channel with superimposed pilot (SP) scheme, dubbed QSP. With linear minimum mean square error (LMMSE) channel estimator and maximum ratio combining (MRC) receiver at the BS, we derive an approximate lower bound on the achievable rate. When optimizing pilot and data powers, the optimal power allocation maximizing the data rate is obtained in a closed-form solution. Although there is a performance gap between the quantized and unquantized systems, it is shown that this gap diminishes as the number of BS antennas is asymptotically large. Moreover, we show that pilot removal from the received signal by using the channel estimate doesn't result in a significant increase in information, especially in the cases of low signal-to-noise ratio (SNR) and a large number of users. We present some numerical results to corroborate our analytical findings and insights are provided for further exploration of the quantized systems with SP.
\end{abstract}

\begin{IEEEkeywords}
	1-bit ADC, massive MIMO, superimposed pilots, time-multiplexed pilots
\end{IEEEkeywords}




\section{Introduction}
\label{sec:sec_1}
{E}{quipping} the base station (BS) with low-resolution analog-to-digital converters (ADCs) or digital-to-analog converters (DACs) is highly appealing in massive multiple-input multiple-output (MIMO) system, owing to the substantial reduction in hardware complexity, energy consumption and the amount of baseband data generated at the BS~\cite{6731024,6891254}. These three considerations are important in massive MIMO, specifically in enabling the new emerging mmWave MIMO technology~\cite{Rangan2014,Orhan2015,Heath2015}, where a much broader bandwidth than the traditional sub-6 GHz band is included and a particularly large number of antennas is employed at the BS. The 1-bit ADC, which is our interest in this work, has a simple structure which boils down to a single comparator with a relatively negligible energy consumption~\cite{Walden1999}. Also, the compensation for variations in the received signal level by automatic gain control is needless. Promoted by such simplicity, there has been an increasing interest in replacing the high-resolution ADCs at the BS by the 1-bit ADCs~\cite{Risi2014,Mo2015,Fan2015,Choi2016,Liang2016,7600443,Li2016}. According to this new architecture, each antenna at the BS is equipped with a pair of 1-bit ADCs, for both the in-phase and quadrature components of the received complex-baseband signal (see the illustration in Fig.~\ref{fig:one-bit system}(a)).  

The impact of the nonlinear distortion of quantization on channel capacity is studied in the literature under various assumptions on the channel state information at both transmitter (CSIT) or receiver (CSIR). With perfect CSIT and CSIR, the capacity of the 1-bit quantized real-valued single-input single-output (SISO) Gaussian channel is studied in~\cite{Singh2006}, where the results show that antipodal signaling is capacity achieving. For fading SISO channel, it turns out that quadrature phase-shift keying (QPSK) signaling is optimal~\cite{4594988}~\cite{Krone2010}~\cite{Mo2015}. A more general transmit-receive antenna configurations with the 1-bit ADC is considered in ~\cite{Mo2015}, where it is shown that QPSK signaling combined with the maximum ratio transmission is optimal for the multiple-input single-output channel (MISO) channel.

In all above works on the capacity of quantized channels, perfect CSI is assumed, while in practice the channel requires being estimated beforehand at the receiver by the aid of dedicated training pilots, for instance. Since quantization process throws away some information about the channel, it turns out that gaining reliable CSI is a challenging issue in quantized systems, especially when the 1-bit ADCs are used. In massive MIMO, the accuracy of CSI at the BS is one of the key requirements to harness its high spectral efficiency~\cite{5595728,6457363}. Thus understanding the performance gap between the quantized and unquantized massive MIMO is important. Since determining the exact capacity of quantized MIMO channel is hard, many researchers have focused on the achievable rate in massive MIMO and the possibility of supporting high-order modulation, while assuming various channel and data estimation techniques at the BS. 

In~\cite{Risi2014}, the 1-bit quantized massive MIMO with QPSK signaling is considered. With least-square (LS) channel estimator and maximum ratio combining (MRC) or zero-forcing (ZF) receiver, it is shown that high data rates can be attained. In~\cite{jacobsson2015} two efficient near maximum-likelihood channel estimator and data detector are developed which allow the support of multiuser and the use of high-order modulation transmission. As well, the authors in~\cite{jacobsson2015} show that LS-channel estimation combined with MRC receiver suffices for holding high-order modulation transmission in multiuser massive MIMO scenario with 1-bit quantization. In~\cite{Wen2015} pilot and data are jointly utilized for channel estimation in a single-cell with 1-bit ADCs, where it is indicated that this approach outperforms the pilot-only approach. Still, its performance under the multicell case is not addressed and no bounds on capacity are given. Results in~\cite{Wang2016} show that a mixed structure of 1-bit ADCs and conventional ADCs can achieve higher spectral and energy efficiency than the traditional massive MIMO. In~\cite{Mezghani2017}, the channel sparsity in massive MIMO channel is exploited to estimate the channel blindly using the expectation-maximization algorithm. It is demonstrated that reliable channel estimation is still possible with 1-bit ADCs. 

In~\cite{7600443}, the uplink in a wideband massive MIMO channel with 1-bit ADCs is considered. With the assumption of quantization noise (QN) being independent and identically distributed (i.i.d.), linear minimum mean square error (LMMSE) channel estimator is derived. Further, with employing MRC or ZF receiver, a lower bound on the achievable rate is obtained in~\cite{7600443}. Moreover, it is proven that the assumption of i.i.d. QN becomes increasingly accurate when working in the low signal-to-noise ratio (SNR) regime or when the number of channel taps is sufficiently large.  In~\cite{Li2016}, a more accurate model which captures the temporal/spatial correlation of QN is derived by utilizing the Bussgang theorem~\cite{Bussgang52}, while assuming a single-cell case with 1-bit quantization. It is shown that considering correlation among QN samples can further improve channel estimate, especially when the input signal is correlated. In Rayleigh fading channels, it turns out that the assumption of i.i.d. QN serves as a good approximation when operating in the low-SNR regime or when the number of users is sufficiently large, validating the observations in~\cite{7600443}. Grounded on the approximate lower bounds on achievable rate established in~\cite{Li2016} for MRC and ZF receivers, it is concluded that high spectral efficiency can be achieved in the single-cell scenarios. In~\cite{Jacobsson2017}, the authors generalize the 1-bit quantized model in~\cite{Li2016} to an arbitrary number of quantization bits. Therefore, they show that high-order modulation is possible even under the 1-bit quantization case, while with a few bits, one can approach the rate achieved when no quantization is used.

Previous works on the 1-bit quantized massive MIMO have focused on \emph{time-multiplexed pilot} (TP) scheme~\cite{1193803,4176578}, where pilot and data symbols are orthogonal in the time domain. In TP scheme, pure pilot symbols of length, say $\tau$ symbols, are transmitted by all users at the start of transmission for channel estimation at the BS. Then, data communication takes place during $T-\tau$ symbol intervals, where $T$ is the coherence time of the channel over which the channel is assumed constant (see the illustration in Fig.~\ref{fig:framestructure}(a)). An interesting and well-known technique called \emph{superimposed pilot} (SP)~\cite{987005,1232493,1658222}, sends pilot and data symbols side-by-side (see the illustration in Fig.~\ref{fig:framestructure}(b)). Unlike SP scheme, with TP scheme no data is transmitted during the training phase, thus a decrease in spectral efficiency might be incurred, especially when $T$ is short. For the \emph{unquantized} massive MIMO channel, it was shown in~\cite{7865983} that SP has a potential to achieve higher spectral efficiency than TP, especially when advanced signal processing is used at the BS.

The authors in~\cite{2017arXiv170907722V} consider the conventional (unquantized) multicell massive MIMO system with SP approach and derive a closed-form expression for the ergodic achievable rate in the uplink. In that respect, it is shown that when both SP and TP approaches are optimized, comparable data rates can be obtained in practical multicellular scenarios. This is because pilot contamination resulting in TP can be, in some sense, equally bad as data interference in SP due to simultaneous transmission of pilots and data. With more advanced signal processing, it is believed that SP has the potential to outperform TP. The direct extension of the work in~\cite{2017arXiv170907722V} to the quantized channel is not possible due to the presence of the QN (not independent of the input signal) which enters in many places in the analysis, rendering the analysis intractable.

\begin{figure}
	\centering
	\includegraphics[width=0.35\textwidth]{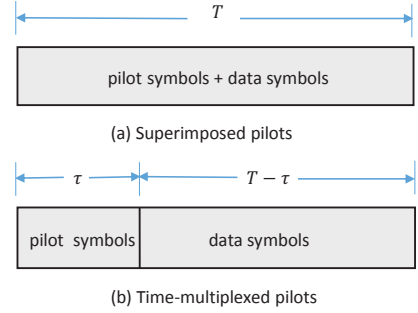} \\
\caption { Illustration of the difference between superimposed pilots and time-multiplexed pilots transmission schemes.}
  \label{fig:framestructure}
\end{figure}

\textit{Our contributions}: In this work, we consider a power-controlled 1-bit quantized massive MIMO system with SP scheme. We are interested in the uplink achievable rate of such a system while assuming LMMSE channel estimator and MRC receiver for data detection at the BS. According to our knowledge, this is perhaps the first attempt that focuses on superimposed pilots in quantized massive MIMO systems. An important contribution of this paper is that the quantized and unquantized systems are asymptotically equivalent as the number of BS antennas $M\to \infty$. Throughout this manuscript we shall refer to the 1-bit quantized channel with SP scheme as \emph{quantized SP} (QSP) and its unquantized (i.e., with infinite-resolution ADCs) counterpart as \emph{unquantized SP} (UQSP).


We commence our work by considering a single-cell scenario, then we extend the results to the multicell scenario. Our theoretical findings agree with the numerical results to corroborate that regardless of the very coarse quantization, QSP provides high data rate. Due to the scope limitation of this work and the lack of an explicit formula of the achievable rate for the 1-bit quantized multicell massive MIMO system with TP, the comparison with the 1-bit quantization with TP case, dubbed \quotes{QTP}, is obtained numerically, where no optimization over training duration or power is considered. The simulation results indicate that QSP can outperform the non-optimized QTP approach in most times. Despite this, a fair evaluation of QSP against the optimized QTP needs more investigations, considering optimizing QTP and using different techniques for channel and data estimation, which is a tedious task and beyond the purpose of this work and hence left for future study.  

The main contributions of our work are summarized as follows:

\begin{enumerate}[leftmargin=*]
	\item A closed-form expression of the LMMSE channel estimate and its mean-square error (MSE) are derived for the QSP. A similar closed-form expression under the no-quantization case is obtained as a special case of the 1-bit QSP.
	\item We obtain a closed-form approximation on the achievable rate for the 1-bit QSP and the optimal power allocation among pilot and data is also given, where the results turn out to be accurate when working in the low-SNR regime or when the number of users is large. A lower bound on the achievable rate for the infinite-resolution case is recovered as a special case.
	\item It is shown that regardless the coarsely quantized signal and symbol-by-symbol superposition of pilot and data, it is even possible to achieve high data rates in multicell massive MIMO.
	\item We show numerically that, under 1-bit quantization case, removing the pilot contribution utilizing the estimated channel from the quantized signal incurs a negligible loss in information in the regimes of low-SNR and a large number of users.
	\item We show that the performance gap between the quantized and unquantized systems vanishes when the number of BS antennas is asymptotically large, i.e., QN can be averaged out asymptotically. In the multicell case, the data rates for QSP and UQSP saturate and converge to a fixed value, given by $\log (1+ \alpha T /\zeta)$ bits/s/Hz, where $\alpha \in (0,1)$ is the power fraction allocated to pilot and $\zeta>K$ is a constant which depends on the number of users and geometry of the network.
\end{enumerate}

\textit{Outline of the paper}: In Sec.\ref{sec:sec_2} the signal model for the 1-bit quantized single-cell MIMO with SP is presented. In Sec.\ref{sec:sec_3} a linear modeling of the quantizer and the details of channel estimation are given. In Sec.\ref{sec:sec_4} the analysis of achievable rates for the quantized and unquantized systems is shown. In Sec.\ref{sec:sec_5}, we extend the previous results to the multicell case and some asymptotic results are established. In Sec.\ref{sec:sec_6}, some numerical results for a multicell massive MIMO system are presented to validate the analytical results and Sec.\ref{sec:sec_7} summarizes this paper.


\section{System Model for single-cell Massive MIMO}
\label{sec:sec_2}

In the first part of this work, we consider the uplink of a single-cell massive MIMO system where the BS has $M$ antennas and serves $K$ ($K \ll M$) single-antenna users in the same time-frequency resource. We consider a block-flat Rayleigh fading channel which remains constant over $T$ symbol intervals, i.e., $T$ is the coherence time of the channel. The channel gain between the $m$-th BS antenna and user $k$ is represented by $\sqrt{\beta_k} h_{mk}$, where $\beta_{k}$ and $h_{m k}$ are the large-scale and small-scale fading coefficients, respectively. The coefficients $\{h_{m k}\}$ are assumed to be i.i.d. $\mathcal{CN}(0,1)$. We assume that the distance between user $k$ and the BS is sufficiently large so that $\beta_{k}$ becomes constant over the antenna array.

In Fig.~\ref{fig:one-bit system} (a) a simplified illustration of the uplink system is shown where the $m$-th BS antenna is equipped with a pair of 1-bit ADCs for both the in-phase and quadrature components of the discrete-time baseband signal $y_m[t], t=1,\cdots,T$. With superimposed pilots, $y_m[t]$ is a noisy superposition of pilot and data symbols subject to channel distortion. For ease of analysis, we assume the transmission of pilot and data symbols takes place over $T$ symbol intervals, i.e., pilot and data have the same length $T$. So, during one coherence interval, the BS collects $2MT$ binary samples. We assume the BS first estimates the channel using LMMSE, then uses the channel estimate for data estimation using MRC receiver. For pilots, we consider all $K$ users use mutually orthogonal sequences. 

With quantized channels, it is shown in~\cite{Ivrlac2007} that the structure of pilot sequences affects the performance. Lately,  the authors in~\cite{7600443} show that sparse pilot sequences (i.e., with many zero entries) give rise to degradation of channel estimation performance. Thus, under quantized channels, non-sparse pilot sequences in the time domain are more preferable. To meet such a requirement, without loss of generality, we assume that the pilot sequences are drawn randomly from a Fourier basis matrix. We recall here that such pilot sequences have nonzero entries where each entry has an absolute value of one.
\begin{figure}
	\centering
	\includegraphics[width=0.45\textwidth]{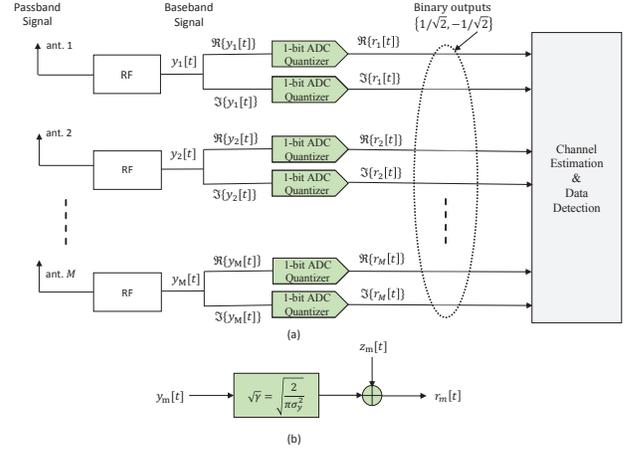} \\
	\caption {Massive MIMO with 1-bit ADCs (a) A schematic representation of uplink system structure with 1-bit ADCs (b) Model of 1-bit ADC quantizer using the Bussgang theorem for Gaussian input, where $y_m[t] \sim \mathcal{CN}(0,\sigma_y^2)$, variance of QN is $\sigma_z^2 = 1-\frac{2}{\pi}$ and $\E  \{z_{m}^{\ast}[t] y_m[t]  \} = 0$.} 
	\label{fig:one-bit system}
\end{figure}

Let $\mf x_k = \left ( x_{k}[1], x_{k}[2], \cdots,x_{k}[T] \right )^{\TT}$ be a vector of $T$ symbols sent by user $k$ during one coherence interval satisfying power constraint:
\begin{equation} \label{eq: power constraint}
	\E \left \{ \norm{\mf{x}_{k}}^2 \right \} \le T.
\end{equation}
Further, let $c_{k}[t]$ and $s_{k}[t]$ be the pilot and data symbols associated with user $k$ during time instant $t$, respectively. Then $x_{k}[t]$ is written as the superposition of $c_{k}[t]$ and $s_{k}[t]$:
\begin{equation} \label{eq:pilot_data_SP}
	x_{k}[t] =\sqrt{\alpha_k}  {c}_{k}[t] + \sqrt{\bar \alpha_k }   s_{k}[t],
\end{equation}
where $\alpha_{k}$ and $\bar \alpha_{k}$ are the power fractions allocated to pilot and data symbols, respectively, such that $ \alpha_{k} + \bar \alpha_{k} = 1$. According to our choice of pilot sequences, we have $\sum_{t=1}^{T} |c_{k}[t]|^2=T$, therefore, to meet the power constraint~\eqref{eq: power constraint}, we assume $\E\{|s_{k}[t]|^2\} = 1$.

In cellular communication systems, power control in the uplink is one of the most important mechanisms for fairness provision among different users experiencing different fading conditions. In particular, the imbalance in received powers of different users becomes a critical issue in 1-bit quantized channels due to the saturation effect of ADC. For instance, when the received signal strength of a close-to-BS user overwhelms the signal of a cell-edge user, the output of 1-bit ADC would have insufficient information (if not lost) about the weak user, making the detection of their presence hard. In this work, power control based on \textit{statistical channel-inverse} is assumed~\cite{Li2016}~\cite{2017arXiv170907722V}. Thus, in Rayleigh fading, the transmit power of user $k$, denoted by $\rho_k$, is chosen to be
\begin{equation}\label{eq:power control}
	\rho_k = \frac{\rho}{\beta_k},
\end{equation}
such that the average received power by the BS from each user is the same, given by $M \rho$, where $\rho$ is some fixed power requirement. As a result, we have $\alpha_k = \alpha$ and $\bar \alpha_k = \bar \alpha, k=1,2,\cdots, K$.

Based on the above discussion, $y_m[t]$ can be written as
\begin{equation}
	y_{m}[t] = \sum_{k=1}^{K}  \sqrt{ \rho} h_{m k} \left(\sqrt{\alpha} c_{k}[t] + \sqrt{\bar \alpha} s_{k}[t]\right) +  w_{m}[t],
	\label{eq:BB signal scalar}
\end{equation}
where $w_{m}[t]\sim \mathcal{CN}(0,1)$ is additive white Gaussian noise (AWGN) which is assumed i.i.d. across space and time. The quantized signal of~\eqref{eq:BB signal scalar} with zero-threshold 1-bit ADCs is:
\begin{equation}
	\label{orig_quantized}
	r_{m}[t]  =\frac{1}{\sqrt{2}} \mathrm {sign} \left( \Re \{ y_{m}[t] \} \right ) + \frac{j}{\sqrt{2}}\mathrm {sign}\left( \Im \{ y_{m}[t]  \} \right),
\end{equation}
where $\Re\{\cdot\}$ and
$\Im\{\cdot\}$ denote the real and complex parts of a complex quantity, respectively,  $\mathrm {sign}(\cdot)$ is the signum function which returns $1$ or $-1$ when its argument is greater or less than zero, respectively and $1/\sqrt{2}$ is a scale normalization factor. It follows that $r_{m}[t] \in \mathcal{A}\triangleq \frac{1}{\sqrt{2}}\{1\pm j,-1\pm j\}$.

Equation~\eqref{eq:BB signal scalar} can be written in a compact matrix form. Let $\mf s_{k} =  \left(s_k[1], \cdots,s_k[T] \right)^{\TT},\mf c_{k} =  \left(c_k[1], \cdots,c_k[T] \right)^{\TT}, \mf h_{k} =  \left(h_{1 k}, \cdots, h_{M k}\right)^{\TT}, \mf y[t]  =  \left(y_{1}[t],\cdots,y_{M}[t]\right)^{\TT}$ and $\mf w[t]  =  \left(w_{1}[t] ,\cdots, w_{M}[t]\right)^{\TT}$ be the length-$T$ vectors of data and pilot symbols, channel vector between user $k$ and all BS antennas, the received signal and AWGN during time $t$ at all BS antennas, respectively. Similarly, we denote by  $\mf r[t]  =  \left(r_{1}[t],\cdots,r_{M}[t]\right)^{\TT}$ and $\mf z[t]  =  \left(z_{1}[t],\cdots,z_{M}[t]\right)^{\TT}$  the received quantized signal and QN across all BS antennas during time instant $t$, respectively. Rearranging~\eqref{eq:BB signal scalar} for all $m=1,2,\cdots,M$ and $t=1,2,\cdots,T$, the $M \times T$ received signal can be written as
\begin{equation} \label{eq:BB signal matrix form}
	\mf Y = \sqrt{\alpha \rho} \mf H \mf C +  \sqrt{\bar \alpha \rho} \mf H \mf S + \mf W,
\end{equation}
and hence the corresponding quantized signal is
\begin{equation}
	\label{eq:orig_quantized_matrix}
	\mf R = \frac{1}{\sqrt{2}} \mathrm {sign} \left( \Re\{\mf Y\}\right) + \frac{j}{\sqrt{2}} \mathrm {sign} \left( \Im\{\mf Y\} \right),
\end{equation}
where the function $\mathrm {sign} (\cdot)$ is applied element-wisely on a matrix,  $\mf Y = \left( \mf y[1],\cdots,\mf y[T]\right), \mf R = \left( \mf r[1],\cdots,\mf r[T]\right)$, $\mf W=\left( \mf w[1],\cdots,\mf w[T]\right)$  and $\mf H , \mf C$ and $\mf S$  are the composite channel, pilot and data matrices, defined, respectively, as $\mf H = \left (\mf h_{1},\cdots,\mf h_{K} \right)\in \mathbb{C}^{M \times K}, \mf C = \left (\mf c_{1}, \cdots, \mf c_{K}\right)^{\TT}\in \mathbb{C}^{K\times T}$ and $\mf S = \left (\mf s_{1},\cdots, \mf s_{K}\right)^{\TT} \in \mathbb{C}^{K \times T}$.

\section{Linear model and channel estimation}
\label{sec:sec_3}
The model equation~\eqref{orig_quantized} is non-linear which makes the analysis intricate. Yet, according to the Bussgang theorem~\cite{Bussgang52}, if the quantizer's input is Gaussian, its output can be decomposed into a sum of a scaled version of its input and uncorrelated QN, where the QN itself is generally correlated~\cite{7600443}. Despite this, the decomposition involves the nonlinear \emph{arcsine law} which makes the nonlinearity unavoidable, i.e., see~\cite{Mezghani2012}~\cite{Li2016} for more details. To avoid such nonlinearity, we make simplifications to the model which allow us to get a closed-form expression for the channel estimator and uplink achievable rate.  

\subsection{Assumptions} \label{sec: Assumption}
To proceed with the development of theoretical results, we allow the following assumptions:
\begin{enumerate}[leftmargin=*]
	\item All data symbols $\{s_k[t]\}$ are i.i.d. $\mathcal{CN}(0,1)$. This allows using existing lower-bounding techniques to obtain a lower bound on capacity~\cite{1193803}.
	\item The signals $\{y_m[t]\}$ are Gaussian random variables, each with zero-mean and variance $\sigma_{y}^2= K \rho+1$. This is justified by the central limit theorem (CLT) as the distribution of $y_m[t]$ tends towards Gaussian as $K$ gets larger. For $K$ being sufficiently large should come as no surprise in massive MIMO.
	\item In the regime of low-SNR or large-$K$, the QN samples are approximated as i.i.d. random variables~\cite{7600443}~\cite{Li2016}.
\end{enumerate}
\subsection{ The linear model  of 1-bit quantizer}
Here we apply the Bussgang decomposition to obtain a linear model for the quantizer, which will be utilized for analysis.  

Let $\bar {\mf {y}}_m=(y_m[1],\cdots,y_m[T])^{\TT},\bar{\mf {r}}_m=(r_m[1],\cdots,r_m[T])^{\TT}$, $\bar{\mf z}_m = (z_m[1],\cdots,z_m[T])^{\TT}$ and $\bar{\mf w}_m=(w_m[1],\cdots,w_m[T])^{\TT}$ be the column vectors constructed from the $m$-th rows of $\mf Y, \mf R, \mf Z$ and $\mf W$, respectively. Further, we define $ \bar{\mf h}_m=(h_{m1},\cdots, h_{mK})^{\TT}$ as the column vector of the channel from all $K$ users to the $m$-th BS antenna. We use bars over all symbols to distinguish them from the original columns of corresponding matrices. Thus, we can write $\bar {\mf {y}}_m=\sqrt{\alpha \rho}  \mf C^{\TT} \bar{\mf h}_m + \sqrt{\bar \alpha \rho} \mf S^{\TT} \bar{\mf h}_m + \bar{\mf w}_m$ and $\bar {\mf {r}}_m= (\mathrm {sign} \left( \Re\{\bar {\mf {y}}_m\}\right) + j\mathrm {sign} \left( \Im\{\bar {\mf {y}}_m\} \right))/{\sqrt{2}} $. Under Rayleigh fading, the rows of $\mf Y$ (and hence the rows of $\mf R$) are i.i.d. with a common covariance matrix defined as $\Sigma_{\bar{ \mf  y}_m} \triangleq  \E \left \{ \bar{ \mf  y}_m \bar{ \mf  y}_m^{\HH} \right \}$, given by
\begin{equation} \label{eq:input_covariance}
	\Sigma_{\bar {\mf {y}}_m} = \alpha \rho \mf C^{\TT} \mf C ^{\ast}+ (\bar \alpha \rho K + 1) \mf I_{T}.
\end{equation}
Thus it suffices to focus only on the linear model of an arbitrary row $m$ of $\mf R, \bar {\mf r}_m$.

Since $\bar {\mf {y}}_m$ is jointly Gaussian (see Assumption 2), therefore, according to the Bussgang theorem, $\bar{\mf r}_m$ can be written as $\bar{\mathbf {r}}_{m} = \mathbf B \bar {\mathbf {y}}_{m}  + \bar{\mathbf z}_{m}$, where $\mathbf B$ is some $T\times T$ matrix, chosen such that $\bar{\mathbf z}_i$ is uncorrelated with $\bar {\mathbf {y}}_{m}$, i.e.,
\begin{equation}\label{qnoise_inp_crosscorrelation}
	\E  \{z_{m}^{\ast}[t] y_{m }[t^{\prime}] \} = 0,\quad \forall t,t^{\prime} \in \{1,\cdots,T\}.
\end{equation}
From~\cite{Li2016}, it is easy to show that $\mf B$ admits the simple form $\mf B= \sqrt{\gamma} \mf I$, where $\gamma$ is a scaling factor defined as
\begin{equation}
	\gamma \triangleq \frac{2}{\pi\sigma_{y}^2},
\end{equation}
and hence $\bar{\mathbf {r}}_{m}$ takes the following form:
\begin{align}\label{eq: quantized_sig_linearmodel2}
	\bar{\mathbf {r}}_{m} = \sqrt{\gamma} \bar {\mathbf {y}}_{m}  + \bar{\mathbf z}_{m}.
\end{align}

From~\eqref{eq: quantized_sig_linearmodel2}, the covariance matrix of QN, defined as $\Sigma_{\bar{ \mf  z}_m} \triangleq  \E  \left \{ \bar{ \mf  z}_m \bar{ \mf  z}_m^{\HH} \right \}$, is given by $\Sigma_{\bar{ \mf  z}_m} = \Sigma_{\bar{ \mf  r}_m}  -  {\gamma} \Sigma_{\bar{ \mf  y}_m}$,
where $\Sigma_{\bar{ \mf  r}_m}\triangleq  \E  \left \{ \bar{ \mf  r}_m  \mf {\bar{ \mf  r}}_m^{\HH} \right \}$ is the autocorrelation matrix of quantizer's output. It is well-known that when quantizer input is Gaussian, $\Sigma_{\bar{ \mf  r}_m}$ follows the arcsine law~\cite{272490}. However, working with the nonlinear arcsine operator turns out to be intractable. However, with the assumption of i.i.d. QN (Assumption 3), $\Sigma_{\bar{ \mf  z}_m} $  reduces to a diagonal matrix given by~\cite{Li2016}:
\begin{equation} \label{eq: noise_covariance}
	\Sigma_{\bar{ \mf  z}_m} \approx \sigma_z^2  \mf  {I}_{T},
\end{equation}
where $\sigma_z^2=1-{2}/{\pi}$ is the variance of QN samples.  Figure.~\ref{fig:one-bit system}(b)  shows a schematic representation of the equivalent linear model of 1-bit quantizer.

It is shown in~\cite{7600443}~\cite{Li2016} that with TP scheme, QN is not only uncorrelated with the input signal but also with the channel. This useful result is stated in the following lemma.
\begin{lemma}[\cite{7600443}~\cite{Li2016}]
	\label{lemm:ch_QN_Correlation}
	For any $m,m^{\prime}\in \{1,\cdots,M\}$ and any $t \in 1,2,\cdots,T$, we have
	\begin{equation}\label{eq: z_h_crosscorrelation}
		\E  \{h_{m^{\prime} k}  z_{m}^{\ast}[t] \} = 0,\quad t=1,\cdots,T.
	\end{equation}
\end{lemma}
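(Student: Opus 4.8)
The plan is to work directly with the scalar form of the Bussgang linear model \eqref{eq: quantized_sig_linearmodel2}, namely $z_m[t]=r_m[t]-\sqrt{\gamma}\,y_m[t]$, so that $\E\{h_{m'k}z_m^*[t]\}=\E\{h_{m'k}r_m^*[t]\}-\sqrt{\gamma}\,\E\{h_{m'k}y_m^*[t]\}$. I would then split the argument according to whether the channel antenna index $m'$ coincides with the noise antenna index $m$.

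For $m'\neq m$ I would use that $h_{m'k}\sim\mathcal{CN}(0,1)$ is zero-mean and, being the fading of a different antenna, is statistically independent of $y_m[t]$ (which by \eqref{eq:BB signal scalar} depends only on $\{h_{mk}\}_k$, the data symbols and $w_m[t]$) and hence independent of $r_m[t]$, which is a deterministic function of $y_m[t]$ through \eqref{orig_quantized}. Both correlations then factor and vanish since $\E\{h_{m'k}\}=0$, giving $\E\{h_{m'k}z_m^*[t]\}=0$ immediately.

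The substantive case is $m'=m$. Here I would first evaluate the linear cross-correlation: using \eqref{eq:BB signal scalar}, the independence of $\{h_{mk}\}$, $\{s_k[t]\}$ and $w_m[t]$, together with $\E\{h_{mk}h_{mk'}^*\}=\delta_{kk'}$ and $\E\{s_k[t]\}=0$, I obtain $\E\{h_{mk}y_m^*[t]\}=\sqrt{\alpha\rho}\,c_k^*[t]$ (the data self-term $h_{mk}s_k[t]$ and all cross terms average to zero). The key step is the nonlinear correlation $\E\{h_{mk}r_m^*[t]\}$. Invoking the joint Gaussianity of $(h_{mk},y_m[t])$ furnished by Assumption 2, the conditional mean is linear, $\E\{h_{mk}\mid y_m[t]\}=\big(\E\{h_{mk}y_m^*[t]\}/\sigma_y^2\big)\,y_m[t]$, so that for the quantizer output, which is a function of $y_m[t]$ alone, the tower rule yields $\E\{h_{mk}r_m^*[t]\}=\big(\E\{h_{mk}y_m^*[t]\}/\sigma_y^2\big)\,\E\{y_m[t]r_m^*[t]\}$. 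Finally I would invoke the defining Bussgang property \eqref{qnoise_inp_crosscorrelation}, i.e. $\E\{y_m[t]z_m^*[t]\}=0$, to get $\E\{y_m[t]r_m^*[t]\}=\sqrt{\gamma}\,\sigma_y^2$, whence $\E\{h_{mk}r_m^*[t]\}=\sqrt{\gamma}\sqrt{\alpha\rho}\,c_k^*[t]=\sqrt{\gamma}\,\E\{h_{mk}y_m^*[t]\}$, and the two contributions to $\E\{h_{mk}z_m^*[t]\}$ cancel exactly.

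I expect the main obstacle to be rigorously justifying the joint Gaussianity of the pair $(h_{mk},y_m[t])$, since Assumption 2 asserts only that the marginal of $y_m[t]$ is Gaussian. I would support it by noting that, conditioned on $h_{mk}$, the signal $y_m[t]$ is a sum of many independent contributions whose conditional mean $\sqrt{\alpha\rho}\,c_k[t]h_{mk}$ is linear in $h_{mk}$ while the conditional covariance is asymptotically independent of $h_{mk}$ for large $K$; the CLT then delivers approximate joint normality, so the linear-conditional-mean identity holds to the same order of accuracy as the large-$K$/low-SNR approximations already adopted in Assumptions 2 and 3. Everything else reduces to elementary moment computations.
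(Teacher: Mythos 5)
Your proof is correct and follows essentially the same route as the cited references that the paper relies on for this lemma (the paper itself offers no proof, only the citations): the Bussgang-gain argument via the linear conditional mean of a jointly Gaussian pair, so that $\E\{h_{mk}r_m^{\ast}[t]\}=\sqrt{\gamma}\,\E\{h_{mk}y_m^{\ast}[t]\}$ and the two contributions to $\E\{h_{mk}z_m^{\ast}[t]\}$ cancel, with the $m'\neq m$ case trivial by independence. Your caveat about joint Gaussianity is well placed: in the TP setting of the cited works the pair $(h_{mk},y_m[t])$ is exactly jointly Gaussian, whereas under SP the product term $h_{mk}s_k[t]$ makes it only approximately so, which is exactly why the paper's remark that the lemma ``is still valid'' under SP holds only to the accuracy of Assumption 2 -- your CLT justification is consistent with the paper's own level of rigor.
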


It should be noted that Lemma~\ref{lemm:ch_QN_Correlation} is still valid when SP scheme is used. We will use Lemma~\ref{lemm:ch_QN_Correlation} when deriving the LMMSE channel estimator.

\subsection{ Channel estimation}
For QSP, we use~\eqref{eq: quantized_sig_linearmodel2} to derive the LMMSE channel estimate of the channel between user $k$ and the $m$-th BS antenna, $ h_{m k} $. With UQSP, the LMMSE channel estimate can be recovered as a special case of the result of QSP when $\bar{\mathbf {r}}_{m}=\bar{\mathbf {y}}_{m}$  (i.e,  $\sigma_z^2=0$ and $\gamma=1$). By left-multiplying $\bar{\mathbf {r}}_{m} $ by  $\mf c_{k}^{\HH}$ we get the following single-valued function:
\begin{equation}\label{eq: CE_model}
	u_{m k} =T \sqrt{\alpha \rho \gamma} h_{m k} + v_{m k},
\end{equation}
where $ v_{m k}$ is a non-Gaussian effective noise defined as
\begin{equation}
	\label{eq: v_mk}
	v_{m k}  = \sqrt{\bar \alpha \rho \gamma}  \mf {c}_{k}^{\HH} \mf {S}^{\TT} \bar {\mf{h}}_m + \sqrt{\gamma}  \mf c_{k}^{\HH} \mf{\bar w}_{m} +  \mf c_{k}^{\HH} \mf{\bar z}_{m}.
\end{equation}

Based on~\eqref{eq: CE_model}, the channel estimate and variance of estimation error are given in the following lemma.
\begin{lemma}
	\label{lemma:QSPS}
	Consider QSP with i.i.d. QN, the LMMSE estimate of $h_{m k}$, denoted by $\hat{h}_{m k} $, is
	\begin{equation}\label{eq: CE_QSP}
		\hat{h}_{m k} =\frac{\sqrt{\alpha \rho \gamma}}{\alpha \rho \gamma T +\bar \alpha \rho \gamma K  + \gamma +\sigma_z^2 } u_{m k}
	\end{equation}
	and the variance of estimation error is
	\begin{equation} \label{eq: CEE_QSP}
		\sigma_{\tilde{h}}^2 = 1-\frac{{\alpha \rho }}{\alpha \rho   +\frac{\bar \alpha \rho K  + \sigma_z^2/\gamma +{1}}{T} }
	\end{equation}
\end{lemma}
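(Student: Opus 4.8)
The plan is to treat \eqref{eq: CE_model} as a scalar observation model and apply the standard complex LMMSE formula. Since $h_{mk}$ and $u_{mk}$ are both zero-mean, the LMMSE estimate is $\hat{h}_{mk} = \frac{\E\{h_{mk} u_{mk}^{\ast}\}}{\E\{|u_{mk}|^2\}} u_{mk}$ with error variance $\sigma_{\tilde h}^2 = \E\{|h_{mk}|^2\} - \frac{|\E\{h_{mk} u_{mk}^{\ast}\}|^2}{\E\{|u_{mk}|^2\}}$. Because $h_{mk}\sim\mathcal{CN}(0,1)$ we have $\E\{|h_{mk}|^2\}=1$, so the whole derivation reduces to evaluating the two second-order quantities $\E\{h_{mk} u_{mk}^{\ast}\}$ (numerator) and $\E\{|u_{mk}|^2\}$ (denominator) and substituting.

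First I would compute the cross-correlation. Writing $u_{mk}^{\ast} = T\sqrt{\alpha\rho\gamma}\,h_{mk}^{\ast} + v_{mk}^{\ast}$ gives $\E\{h_{mk} u_{mk}^{\ast}\} = T\sqrt{\alpha\rho\gamma} + \E\{h_{mk} v_{mk}^{\ast}\}$, and the task is to kill the second term. Examining the three pieces of $v_{mk}$ in \eqref{eq: v_mk}: the AWGN term drops because $\bar{\mf w}_m$ is zero-mean and independent of the channel; the QN term drops by Lemma~\ref{lemm:ch_QN_Correlation}, which gives $\E\{h_{mk} z_m^{\ast}[t]\}=0$; and the data-interference term $\sqrt{\bar\alpha\rho\gamma}\,\mf c_k^{\HH}\mf S^{\TT}\bar{\mf h}_m$ drops because, expanding it, every surviving summand carries a lone factor $\E\{s_j^{\ast}[t]\}=0$ (data zero-mean and independent of the channel, Assumption~1). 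Hence $\E\{h_{mk} u_{mk}^{\ast}\} = T\sqrt{\alpha\rho\gamma}$.

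Next I would evaluate $\E\{|u_{mk}|^2\} = T^2\alpha\rho\gamma + \E\{|v_{mk}|^2\}$, the cross term again vanishing by the previous step. The three-term expansion of $\E\{|v_{mk}|^2\}$ is where the bookkeeping lives: for the data-interference term I would use independence of data and channel together with $\E\{h_{mj}h_{mj'}^{\ast}\}=\delta_{jj'}$ and $\E\{s_j[t]s_j^{\ast}[t']\}=\delta_{tt'}$ to collapse the double sum to $\bar\alpha\rho\gamma K \sum_t |c_k[t]|^2$; for the AWGN term whiteness of $\bar{\mf w}_m$ gives $\gamma\sum_t|c_k[t]|^2$; and for the QN term the i.i.d.\ approximation $\Sigma_{\bar{\mf z}_m}\approx\sigma_z^2\mf I_T$ of \eqref{eq: noise_covariance} gives $\sigma_z^2 \mf c_k^{\HH}\mf c_k$. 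Using the non-sparse pilot normalization $\sum_{t=1}^T|c_k[t]|^2 = T$ throughout, these sum to $\E\{|v_{mk}|^2\} = T(\bar\alpha\rho\gamma K + \gamma + \sigma_z^2)$, hence $\E\{|u_{mk}|^2\} = T(\alpha\rho\gamma T + \bar\alpha\rho\gamma K + \gamma + \sigma_z^2)$. Plugging the numerator and denominator into the two LMMSE formulas and cancelling the common factor $T$ (and, for the error variance, dividing through by $\gamma T$) yields exactly \eqref{eq: CE_QSP} and \eqref{eq: CEE_QSP}.

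The main obstacle I anticipate is the honest justification that the remaining cross terms vanish, in particular the correlation between the data-interference term and the QN term in $\E\{|v_{mk}|^2\}$, and between the AWGN and QN terms. These mix the channel/data with the quantization noise, which is a deterministic nonlinear function of $\bar{\mf y}_m$ and therefore not genuinely independent of the channel or data. I would dispatch them by working in the regime of Assumption~3 (low-SNR or large $K$): there the Bussgang orthogonality $\E\{z_m^{\ast}[t]y_m[t']\}=0$ together with Lemma~\ref{lemm:ch_QN_Correlation} lets the QN be treated as uncorrelated with the remaining components, so the off-diagonal contributions are negligible and $\E\{|v_{mk}|^2\}$ reduces to the sum of the three marginal variances used above.
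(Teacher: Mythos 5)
Your proposal is correct and follows essentially the same route as the paper: the scalar LMMSE formula applied to \eqref{eq: CE_model}, with the cross-correlation killed by Lemma~\ref{lemm:ch_QN_Correlation} and zero-mean independence of data/AWGN, and the denominator assembled from \eqref{eq:input_covariance}, \eqref{qnoise_inp_crosscorrelation} and \eqref{eq: noise_covariance}; your term-by-term expansion of $\E\{|v_{mk}|^2\}$ is just the component form of the paper's $\mf c_k^{\HH}\Sigma_{\bar{\mf r}_m}\mf c_k$ computation. Your closing worry about the mixed QN cross terms is already resolved exactly by the Bussgang orthogonality plus Lemma~\ref{lemm:ch_QN_Correlation} (Assumption~3 is only needed for the diagonal approximation of $\Sigma_{\bar{\mf z}_m}$), so no additional asymptotic argument is required.
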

\goodbreak
\begin{proof}
	Using~\eqref{eq: CE_model} and applying the standard LMMSE solution~\cite{sayed2011adaptive} for $h_{mk}$, we get $\hat{h}_{m k}= \frac{ \E \{ h_{m k } u_{m k}^{\ast} \}}{ \E \{|u_{m k}|^2\}} u_{m k}$. In calculating $\E \{ h_{m k } u_{m k}^{\ast} \}$ we make use of Lemma~\ref{lemm:ch_QN_Correlation} and in calculating $\E \{|u_{m k}|^2\}$ we make use of~\eqref{eq:input_covariance},~\eqref{qnoise_inp_crosscorrelation} and~\eqref{eq: noise_covariance}.
	
\end{proof}
The channel estimate and the variance of estimation error of UQSP can be recovered from Lemma~\ref{lemma:QSPS}. Thus we are led to the following lemma:
\begin{lemma}\label{lemma:UQSP}
	For UQSP, the LMMSE estimate of $h_{m k}$, denoted by $\hat{h}_{m k} $, is given by
	\begin{equation}\label{eq: CE_UQSP}
		\hat{{h}}_{m k} =\frac{\sqrt{\alpha \rho}}{\alpha T \rho + \bar \alpha K \rho  + 1  } \bar{u}_{m k}
	\end{equation}
	and the variance of estimation error is
	\begin{equation}\label{eq: CEE_UQSP}
		\bar{\sigma}_{\tilde{{h}}}^2 = 1- \frac{{\alpha \rho}}{\alpha \rho   +\frac{\bar \alpha \rho K+1}{T} }.
	\end{equation}
	where $\bar{u}_{m k}=  T \sqrt{\alpha \rho } h_{m k} +\sqrt{\bar \alpha \rho } { \mf {c}}_{k}^{\HH} \mf {S}^{\TT} \bar {\mf{h}}_m + \mf c_{k}^{\HH} \mf{\bar w}_{m}$.
\end{lemma}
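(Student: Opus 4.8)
The plan is to obtain Lemma~\ref{lemma:UQSP} as the degenerate, infinite-resolution specialization of Lemma~\ref{lemma:QSPS}. With ideal ADCs the quantizer is the identity map, so $\bar{\mf r}_m = \bar{\mf y}_m$; in the Bussgang linear model $\bar{\mf r}_m = \sqrt{\gamma}\,\bar{\mf y}_m + \bar{\mf z}_m$ this corresponds exactly to the parameter choice $\gamma = 1$ and $\bar{\mf z}_m = \mf 0$ (equivalently $\sigma_z^2 = 0$). First I would record that under this choice the projected observation $u_{mk}$ of~\eqref{eq: CE_model}--\eqref{eq: v_mk} collapses term by term onto $\bar u_{mk}$: the pilot gain $T\sqrt{\alpha\rho\gamma}$ becomes $T\sqrt{\alpha\rho}$, the data-interference and AWGN terms of $v_{mk}$ shed their $\sqrt{\gamma}$ factor, and the quantization-noise term $\mf c_k^{\HH}\bar{\mf z}_m$ disappears. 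This identifies the UQSP observation model with that used in Lemma~\ref{lemma:QSPS} evaluated at $(\gamma,\sigma_z^2)=(1,0)$.

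Given that identification, the two claimed formulas follow by direct substitution. Setting $\gamma=1$, $\sigma_z^2=0$ in~\eqref{eq: CE_QSP} turns the estimator gain $\sqrt{\alpha\rho\gamma}/(\alpha\rho\gamma T + \bar\alpha\rho\gamma K + \gamma + \sigma_z^2)$ into $\sqrt{\alpha\rho}/(\alpha T\rho + \bar\alpha K\rho + 1)$, which is~\eqref{eq: CE_UQSP}; and the same substitution in~\eqref{eq: CEE_QSP}, where the correction $\sigma_z^2/\gamma$ vanishes, yields~\eqref{eq: CEE_UQSP}.

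To keep the argument self-contained rather than relying on a formal limit of the Bussgang parameterization, I would also rederive the estimator directly from the standard relation $\hat h_{mk} = \bigl(\E\{h_{mk}\bar u_{mk}^{\ast}\}/\E\{|\bar u_{mk}|^2\}\bigr)\,\bar u_{mk}$. The only work is the two second moments. In $\E\{h_{mk}\bar u_{mk}^{\ast}\}$ the data and noise contributions vanish because $\{s_k[t]\}$ are zero-mean and independent of $\{h_{mk}\}$ and $\{w_m[t]\}$, leaving $T\sqrt{\alpha\rho}$. For $\E\{|\bar u_{mk}|^2\}$ the three terms of $\bar u_{mk}$ are mutually uncorrelated, and invoking the pilot orthogonality $\mf c_k^{\HH}\mf c_{k'} = T\delta_{kk'}$ together with $\sum_t|c_k[t]|^2 = T$ gives $T^2\alpha\rho$ for the pilot term, $\bar\alpha\rho K T$ for the data interference, and $T$ for the AWGN, so $\E\{|\bar u_{mk}|^2\} = T(\alpha\rho T + \bar\alpha\rho K + 1)$. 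The ratio reproduces~\eqref{eq: CE_UQSP}, and $\bar\sigma_{\tilde h}^2 = 1 - |\E\{h_{mk}\bar u_{mk}^{\ast}\}|^2/\E\{|\bar u_{mk}|^2\}$ reproduces~\eqref{eq: CEE_UQSP}.

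The step deserving the most care is not a calculation but the legitimacy of the specialization itself: since $\gamma = 2/(\pi\sigma_y^2)$ with $\sigma_y^2 = K\rho+1 \ge 1$, literally enforcing $\gamma=1$ is inconsistent with the Bussgang scaling, so the cleanest justification is the direct rederivation above, with the substitution route understood as treating $(\gamma,\sigma_z^2)$ as free parameters of the linear observation model. Beyond that, the only thing to confirm is that the facts borrowed in Lemma~\ref{lemma:QSPS}, namely Lemma~\ref{lemm:ch_QN_Correlation} and the input--noise uncorrelatedness~\eqref{qnoise_inp_crosscorrelation}, become vacuous (not merely approximate) once $\bar{\mf z}_m = \mf 0$, so that no approximation error survives into the UQSP expressions.
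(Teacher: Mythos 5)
Your proposal is correct and follows essentially the same route as the paper, which likewise obtains Lemma~\ref{lemma:UQSP} by setting $\bar{\mf r}_m=\bar{\mf y}_m$ (i.e., $\gamma=1$, $\sigma_z^2=0$) in \eqref{eq: CE_model}--\eqref{eq: CEE_QSP}. Your supplementary direct LMMSE computation of the two second moments is a sound (and slightly more careful) verification than the paper provides, and your remark that $\gamma=1$ should be read as a free-parameter substitution rather than a literal Bussgang value is a fair observation, but it does not change the argument.
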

\goodbreak
\begin{proof}
	Note that under UQSP, $\mf{\bar r}_{m} = \mf{\bar y}_{m}$ (i.e., $\sigma_z^2 = 0$ and $\gamma=1$). Thus by redefining~\eqref{eq: CE_model}-~\eqref{eq: CEE_QSP} according to this change,~\eqref{eq: CE_UQSP} and~\eqref{eq: CEE_UQSP} follow immediately.
\end{proof}

Although $\hat{h}_{m k}$ in Lemma~\ref{lemma:QSPS} doesn't coincide with the MMSE solution ($\hat{h}_{m k}$ is not truly Gaussian because $u_{m k}$ is not Gaussian), $\hat{h}_{m k}$ can still be approximated as Gaussian. This elicits from the fact that $\hat{h}_{m k}$ is a weighted sum of $T$ random variables (typically large), thus by the virtue of CLT, the distribution of $\hat{h}_{m k}$ tends towards Gaussian.

By inspecting Lemmas~\ref{lemma:QSPS} \&~\ref{lemma:UQSP}, the variance of estimation error can be arbitrarily small as $T$ increases. It is interesting to see from~\eqref{eq: CEE_QSP} that the effect of AWGN and QN is scaled down by a factor of $T$, compared with a factor of $\tau$ with TP scheme (see~\cite{Li2016}), where $T$ is typically much larger than $\tau$. Still, with SP, there is an extra term $\bar \alpha \rho \gamma K/T$ due to data interference while there is no data interference during the training phase under the TP scheme. In both the quantized and unquantized systems, it is worth observing that there is a \emph{saturation effect} in LMMSE performance as $\rho$ grows large. This suggests that there is a $\rho$-threshold after which no increase in data rate is achieved. Increasing pilot power (by increasing $\alpha$) leads always to an improvement of channel estimate quality while data rate may decrease as data power decreases, even so, it turns out that this relation is not monotonic. Hence, there is a trade-off between channel estimate quality and data rate, implying the existence of an optimal fraction $\alpha$ of total power which gives rise to the highest data rate.    

We conclude this section by making the following remark on the fundamental challenges in the analysis of QTP.

\begin{remark}\label{rem:CE_data_depednecy}
	In contrast with QTP where the channel estimate is independent of data, with QSP the channel estimate (and hence the estimation error) is not only dependent on data symbols transmitted from all users but also on AWGN and QN. Besides, although the QN and input signal are uncorrelated, they are not independent. The QN enters different places in the analysis, especially when calculating the variance of MRC output. Typically, this requires a complete characterization of statistical relationships with other random variables through the joint probability densities, which are hard to obtain. Altogether, this is an undesirable property which renders the capacity analysis intractable as will be discussed next.
\end{remark}

\section{Analysis of achievable rates}
\label{sec:sec_4}
This section is concerned with obtaining an approximate lower bound on the ergodic achievable rate for QSP massive MIMO where the BS uses LMMSE and MRC for channel and data estimation, respectively. We then leverage the analysis of QSP to obtain a lower bound on the ergodic achievable rate for UQSP. In the rest of this section, a comparison between QSP and UQSP is established through asymptotic analysis.

Without loss of generality, we focus on the information rate of the $k$-th user during an arbitrary time $t$.  Thus the channel model corresponds to the $t$-th column of $\mf R$ which can be written as $\mf r[t] =\sqrt{\gamma} \mf y[t]+\mf z[t]= \sqrt{\alpha \rho \gamma} \mf H \bar{ \mf c}[t] + \sqrt{\bar \alpha \rho \gamma} \mf H \bar{ \mf s}[t] + \sqrt{ \gamma}  \mf w[t] + \mf z[t]$, where $\bar{ \mf s}[t]=(s_{1}[t], s_{2}[t], \cdots, s_{K}[t])^{\TT}$ and $\bar{ \mf c}[t]= (c_{1}[t], c_{2}[t], \cdots, c_{K}[t])^{\TT}$ are the column vectors of $\mf S$ and $\mf C$, respectively. We stack all channel estimates $\{\hat h_{m k}\}$ of the $k$-th user as a column vector given by $\hat{\mf {h}}_k =(\hat h_{1 k},\cdots,\hat h_{m k},\cdots, \hat h_{M k})^{\TT}$. Thus we have
\begin{equation}\label{eq: h_k_hat}
	\hat{ \mf  {h}}_k = \xi  \mf R \mf c_{k}^{\ast},
\end{equation}
where $\xi \triangleq {\sqrt{\alpha \rho \gamma}}/({\alpha \rho \gamma T +\bar \alpha \rho K \gamma + \gamma +\sigma_z^2 })$.

Given the channel estimate $\hat {\mf H}  = [\hat{\mf  {h}}_1 \hat{\mf  {h}}_2 \cdots \hat{\mf  {h}}_K ]\in \mathbb{C}^{M\times K}$, the contribution of pilots can be removed from ${\mf  r}[t]$ as follows:
\begin{IEEEeqnarray} {rCl}
	\label{eq:PR_model} \nonumber
	&& { \mf  r}^{\text{PR}}[t] \triangleq  { \mf  r}[t]-\sqrt{\alpha \rho \gamma} \hat {\mf H}  \bar{ \mf  c}[t]\\
	&&= \sqrt{\bar \alpha \rho \gamma}  {\mf H}  \bar{ \mf  s}[t] +\sqrt{\gamma} { \mf  w}[t] +  { \mf  z}[t] + \underbrace{\sqrt{\alpha \rho \gamma} \tilde {\mf H}  \bar{ \mf  c}[t]}_{\text{residual pilot noise}}
\end{IEEEeqnarray}
where $\tilde {\mf H} = \mf H-\hat {\mf H} $ is the matrix of channel estimation error. In~\eqref{eq:PR_model} we use the notation \quotes{PR} to denote pilot removal. The basic problem of the model~\eqref{eq:PR_model} is that the last term is not independent of the other remaining terms (also cross-correlated), especially data (see Remark~\ref{rem:CE_data_depednecy}), which is a primary source of intractability in the analysis of capacity.

To maintain the analysis tractable, pilot removal (PR) is not considered in this work, similar to the technique used in~\cite{2017arXiv170907722V} for the analysis of the unquantized MIMO channel with SP. Instead, $\mf r[t]$ is utilized directly at BS for data estimation, which can be rewritten as 
\begin{align}
	\label{sig_with_pilot}
	{ \mf  r}[t] = \underbrace{\sqrt{\bar \alpha \rho \gamma} \mf  h_k s_{k}[t] }_{\substack{\text{$k$-th user} \\ \text{information}}}+ \underbrace{\tilde{ \mf  r}[t]}_{\substack{\text{non-Gaussian} \\ \text{effective noise}}},
\end{align}
where $\tilde{ \mf  r}[t] \in \mathbb{C}^{M}$ is given by
\begin{IEEEeqnarray}{rcl} \nonumber
	\tilde{ \mf  r}[t] &=& \sqrt{\bar \alpha \rho \gamma} \left( \mf  H \bar{ \mf s}[t] -  \mf  h_k s_{k}[t]\right )+ \sqrt{ \alpha \rho \gamma}  {\mf H \bar{ \mf c}[t]}\\
	&& \qquad +\> \sqrt{\gamma} { \mf  w}[t] + { \mf  z}[t].
	\label{eq:non-Gaussian effective noise}
\end{IEEEeqnarray}

Although PR gives rise to increase in data rate, it turns out that the loss in information resulting from using~\eqref{sig_with_pilot}, instead of~\eqref{eq:PR_model} is not large, especially in the low-SNR and large-$K$ regimes, where these two regimes come at no surprise in massive MIMO settings. From~\eqref{eq: h_k_hat} and~\eqref{sig_with_pilot}, the MRC output, $\hat s_{k}[t]  \triangleq \frac{1}{{M}}\hat{ \mf  h}_k^{\HH} \mf  r [t]$, can be written as
\begin{align} \label{eq:process_sig}
	\hat s_{k}[t]  = \underbrace{\frac{\xi \sqrt{\bar \alpha \rho \gamma}}{{M}}  \mf {c}_k^{\TT}  \mf  R^{\HH} \mf  h_k s_{k}[t]}_{\text{useful information}}   + \underbrace{\frac{\xi }{{M}}  \mf {c}_k^{\TT}  \mf  R^{\HH} \tilde{ \mf  r}[t]}_{\text{effective noise}},
\end{align}
where the scaling factor $1/M$ is introduced here for establishing some asymptotic results later. 

While pilot removal is not considered, however, the model~\eqref{eq:process_sig} is still intractable. This elicits from the fact that the quantized signal $\mf R$ is not independent of information $s_{k}[t]$. In addition, the distribution of effective noise in~\eqref{eq:process_sig} is unknown and the effective noise is not independent of the signal part (also correlated), especially due to the presence of QN (see also Remark~\ref{rem:CE_data_depednecy}). Despite this, we make use of some mathematical simplifications and asymptotic results, owing to some statistical properties of QN to tackle the intractability of the model~\eqref{eq:process_sig}.

In the following, we leverage~\eqref{eq:process_sig}  to derive an approximate lower bound on the mutual information $I(s_{k}[t];\hat s_{k}[t])$ with the Gaussian assumption on $s_{k}[t]$.

\subsection{Quantized superimposed pilots}
From the aforementioned discussion, the exact capacity of~\eqref{eq:process_sig} is unknown. The following result holds when the number of BS antennas is sufficiently large.
\begin{thm} [approximate lower bound]
	\label{thm: MRC_Rate_QSP}
	Considering QSP massive MIMO where QN is assumed to be i.i.d. (Assumption 3), if the BS employs LMMSE channel estimation and MRC data estimation, an approximate lower bound on the achievable rate in uplink is
	\begin{equation}
		\label{eq: MRC_Rate_QSP}
		R_{\text{QSP}} \approx \log \left( 1+ \Upsilon_{\text{QSP}}  \right) \quad ({\text{bits/s/Hz}}),
	\end{equation}
	where $\Upsilon_{\text{QSP}}$ is the effective SNR at the MRC output given by~\eqref{eq:SNR_QSP_singlecell} which appears at the top of page~\pageref{eq:SNR_QSP_singlecell}.
\end{thm}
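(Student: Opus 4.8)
The plan is to apply the standard \quotes{worst-case uncorrelated-noise} lower-bounding technique (in the spirit of~\cite{1193803}) to the MRC output~\eqref{eq:process_sig}. First I would cast $\hat s_k[t]$ in the canonical form $\hat s_k[t]=a\,s_k[t]+n_k[t]$, where the deterministic forward gain is the projection of the output onto the input, $a\triangleq\E\{\hat s_k[t]\,s_k^{\ast}[t]\}$ (recall $\E\{|s_k[t]|^2\}=1$), so that the residual $n_k[t]\triangleq\hat s_k[t]-a\,s_k[t]$ is uncorrelated with $s_k[t]$ \emph{by construction}. Invoking the Gaussian-input assumption (Assumption~1) and replacing $n_k[t]$ by the entropy-maximizing Gaussian noise of the same variance then gives
\[
	I\!\left(s_k[t];\hat s_k[t]\right)\ge \log\!\left(1+\frac{|a|^2}{\E\{|\hat s_k[t]|^2\}-|a|^2}\right),
\]
which identifies $\Upsilon_{\text{QSP}}$ with the ratio on the right-hand side. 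The virtue of this form is that it demands only uncorrelatedness, not independence, and thereby circumvents the data/noise dependence flagged in Remark~\ref{rem:CE_data_depednecy}.

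The substance of the proof is the explicit evaluation of the two moments $a$ and $\E\{|\hat s_k[t]|^2\}$. I would start by substituting the matrix Bussgang model $\mf R=\sqrt{\gamma}\,\mf Y+\mf Z$ into~\eqref{eq:process_sig}, so that each factor $\mf R^{\HH}$ splits into a signal-correlated Gaussian part $\sqrt{\gamma}\,\mf Y^{\HH}$ and a quantization-noise part $\mf Z^{\HH}$. For the gain $a$, the contributions of the other users' data, the pilots, the AWGN and the QN present in $n_k[t]$ all vanish in expectation by the orthogonality relations~\eqref{qnoise_inp_crosscorrelation}, Lemma~\ref{lemm:ch_QN_Correlation}, and the independence of distinct data streams; what survives is the mean gain of the desired stream, which the $1/M$ normalization keeps at $O(1)$ through a law-of-large-numbers concentration as $M\to\infty$.

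The evaluation of $\E\{|\hat s_k[t]|^2\}$ is where I expect the principal difficulty. Expanding the square produces fourth-order mixed moments in $\mf H$, $\mf S$, $\mf W$ and $\mf Z$, since $\mf R^{\HH}$ enters twice. The purely Gaussian contributions I would reduce by Isserlis/Wick factorization, keeping only the leading terms in $M$; every term containing $\mf Z$ I would treat through the i.i.d.-QN approximation~\eqref{eq: noise_covariance} together with the cross-correlation identities~\eqref{qnoise_inp_crosscorrelation} and~\eqref{eq: z_h_crosscorrelation}. The obstruction is that the QN is only \emph{uncorrelated} with---not independent of---its input, so the genuine fourth-order moments coupling $\mf Z$ to the data are not exactly computable; it is exactly the low-SNR / large-$K$ regime of Assumptions~2--3 that legitimizes replacing them by their Gaussian surrogates, and this is what renders the bound \emph{approximate}. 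Finally, assembling the surviving signal power, inter-user interference, pilot interference, residual channel-estimation error, amplified AWGN and QN variance, and forming $\Upsilon_{\text{QSP}}=|a|^2/(\E\{|\hat s_k[t]|^2\}-|a|^2)$, yields the claimed closed form~\eqref{eq:SNR_QSP_singlecell}.
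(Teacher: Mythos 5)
Your overall route is the paper's route: write the MRC output~\eqref{eq:process_sig} as $\hat s_k[t]=a_k s_k[t]+\epsilon_k[t]$ with $a_k=\E\{s_k^{\ast}[t]\hat s_k[t]\}$, invoke the worst-case-Gaussian-noise bound, split $\mf R$ via Bussgang into $\sqrt{\gamma}\,\mf Y+\mf Z$, and evaluate the resulting fourth-order moments under the i.i.d.-QN approximation (the paper organizes these as the terms $f_1$--$f_{10}$ in~\eqref{eq:zeta_fn} and disposes of the QN-bearing ones with Lemmas~\ref{lem: xxx0}--\ref{lem: xxx3}). You also correctly locate where the ``approximate'' enters. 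However, there is one concrete omission that prevents your recipe from reproducing~\eqref{eq:SNR_QSP_singlecell}: because pilot removal is \emph{not} performed, $\hat s_k[t]$ has a nonzero deterministic mean, $\E\{\hat s_k[t]\}=\sqrt{\alpha\rho\gamma}\,c_k[t]$ (see~\eqref{eq:Expected_shat}), which is known to the receiver since $c_k[t]$ is the pilot. The effective-noise power must therefore be taken as $\sigma_{\epsilon_k}^2=\E\{|\hat s_k[t]|^2\}-|\E\{\hat s_k[t]\}|^2-|a_k|^2$, i.e.\ the known mean is subtracted out rather than booked as noise. Your denominator $\E\{|\hat s_k[t]|^2\}-|a|^2$ retains the term $\alpha\rho\gamma$ as if it were random interference; after normalizing by $|a_k|^2=\alpha\bar\alpha\rho^2\gamma^2\xi^2T^2$ this contributes a non-negligible $1/(\bar\alpha\rho\gamma\xi^2T^2)$ to $\tilde\sigma_{\epsilon_k}^2$, so you would obtain a strictly looser SINR that does not match the claimed closed form. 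Equivalently, the worst-case-Gaussian argument should be applied to the mean-removed observation $\hat s_k[t]-\E\{\hat s_k[t]\}$; stating this explicitly closes the gap.

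A second, smaller point: your formula for $a$ is exact only as a definition; the paper's closed form $a_k\approx\sqrt{\bar\alpha\alpha}\,\rho\gamma\xi T$ in~\eqref{eq:a_kt_evaluation} already discards the $\mf y^{\HH}\mf z$, $\mf z^{\HH}\mf y$ and $\mf z^{\HH}\mf z$ cross terms by a law-of-large-numbers argument (Lemma~\ref{lem: xxx1}), so the gain itself is approximate, not just the noise variance. You gesture at this with the $M\to\infty$ concentration, but you should be explicit that $a_k$ is replaced by its large-$M$ limit before forming the ratio, since otherwise the residual finite-$M$ fluctuation of the gain would have to be absorbed into the noise term. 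With the mean subtraction restored and this clarification made, your argument coincides with the paper's proof in Appendix~\ref{app: proof_thm1}.
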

\begin{figure*}[!btp]
	\normalsize
	\begin{IEEEeqnarray}{rCl}  \nonumber
		\Upsilon_{\text{QSP}} &=& \biggl({ \bar \alpha \alpha  \rho ^2 T^2 M}\biggl) \biggl/ \biggl (
		\bar \alpha  \rho ^2 K  T M  + \frac{T}{4} \left(8 \alpha  \bar \alpha  \rho ^2-4 \alpha  K^2 \rho ^2+\pi ^2 K^2 \rho ^2-4 \alpha  K \rho +2 \pi ^2 K \rho +\pi ^2\right)\\
		&&+\>  \alpha ^2 K \rho ^2 - 2 \alpha  K \rho ^2 -\frac{1}{4} \pi ^2 K^2 \rho ^2   +  K \rho ^2 - \frac{1}{2} \pi ^2 K \rho  + \alpha  \rho T^2 \left( 1 + K \rho \right)-\frac{\pi ^2}{4}
		\biggl)
		\label{eq:SNR_QSP_singlecell}
	\end{IEEEeqnarray}
	\hrulefill
	\vspace*{4pt}
\end{figure*}
\begin{proof}
	The proof unfolds by rewriting~\eqref{eq:process_sig} as a sum of two parts; desired signal with a multiplicative deterministic channel gain and uncorrelated noise, i.e., $\hat s_{k}[t] = a_{k} s_{k}[t]+  \epsilon_{k} [t]$. Then the lower bound follows by assuming the worst-case noise (i.e., Gaussian noise) with the same variance. 
	The details of the proof and hence the closed-form expression for $\Upsilon_{\text{QSP}}$ in~\eqref{eq:SNR_QSP_singlecell} are shown in Appendix~\ref{app: proof_thm1}.
\end{proof}

The lower bound~\eqref{eq: MRC_Rate_QSP} can be maximized w.r.t.  $\alpha$. The result is stated in the following corollary.
\begin{corollary} \label{cor:optimal_alpha_QSP}
	Let $ \alpha^{\ast}\in(0,1)$ be the optimal value of power fraction which maximizes $R_{\text{QSP}}$~\eqref{eq: MRC_Rate_QSP}. Then $\alpha^{\ast}$ is given by one of the following two roots:
	\begin{equation}
		\alpha^{\ast} = \frac{\lambda \pm \sqrt{\lambda \sigma _y^2 \left(4 \rho  T (T-K)+ (T-1) \sigma _y^2\pi ^2 \right)}}{4 \rho  (K \rho  (T (K+M-T)+1)-T (T-K))}
		\label{eq:opt_alpha_value}
	\end{equation}
	where $\lambda= 4 \rho ^2 K (M T+1)+ \pi ^2 \sigma _y^4 (T-1)$.
\end{corollary}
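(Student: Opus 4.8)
The plan is to exploit the monotonicity of the logarithm: since $\log(1+x)$ is strictly increasing, maximizing $R_{\text{QSP}}$ in~\eqref{eq: MRC_Rate_QSP} over $\alpha\in(0,1)$ is equivalent to maximizing the effective SNR $\Upsilon_{\text{QSP}}$ in~\eqref{eq:SNR_QSP_singlecell}. The first move is to substitute $\bar\alpha=1-\alpha$ throughout and collect powers of $\alpha$. Inspecting~\eqref{eq:SNR_QSP_singlecell}, the numerator becomes $\rho^2 T^2 M\,\alpha(1-\alpha)$, a quadratic in $\alpha$ with vanishing constant term, while the only $\alpha$-dependent denominator contributions arise from the $8\alpha\bar\alpha\rho^2$, the $\alpha^2 K\rho^2$, and the various linear terms, so the denominator is also quadratic in $\alpha$. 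Hence $\Upsilon_{\text{QSP}}(\alpha)=N(\alpha)/D(\alpha)$ with $N,D$ both quadratic and $N(\alpha)=\rho^2 T^2 M(\alpha-\alpha^2)$.

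With $\Upsilon_{\text{QSP}}$ expressed as a ratio of quadratics, I would impose the first-order condition $\mathrm{d}\Upsilon_{\text{QSP}}/\mathrm{d}\alpha=0$, which reduces to $N'(\alpha)D(\alpha)-N(\alpha)D'(\alpha)=0$. The key structural observation is that although $N'D$ and $ND'$ are each cubic in $\alpha$, their leading ($\alpha^3$) coefficients are both equal to $2a_2 b_2$ and therefore cancel, so $N'D-ND'$ is a genuine quadratic in $\alpha$. Its two roots are precisely the $\pm$ pair in~\eqref{eq:opt_alpha_value}: by the quadratic formula, the denominator $4\rho\big(K\rho(T(K+M-T)+1)-T(T-K)\big)$ is twice the leading coefficient of $N'D-ND'$, and the quantity $\lambda$ in the numerator is minus its linear coefficient.

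The bulk of the work---and the main obstacle---is showing that the discriminant of this quadratic collapses to the compact factored expression under the square root in~\eqref{eq:opt_alpha_value}. After rewriting every coefficient in terms of $\sigma_y^2=K\rho+1$ and $\lambda=4\rho^2 K(MT+1)+\pi^2\sigma_y^4(T-1)$, the discriminant $B^2-4AC$ equals $\lambda^2-4AC$. The non-obvious simplification is that $4AC=\lambda\cdot 4\rho\big[\rho K(MT+1)-T(T-K)\sigma_y^2\big]$, so that the two $\pi^2\sigma_y^4(T-1)$ contributions cancel and $\lambda$ factors out, leaving $B^2-4AC=\lambda\sigma_y^2\big(4\rho T(T-K)+(T-1)\sigma_y^2\pi^2\big)$. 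This cancellation is exactly what makes the closed form possible, and verifying it will require careful bookkeeping of the quadratic coefficients of $N'D-ND'$.

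Finally, I would argue which of the two roots is the maximizer. Since $N(0)=N(1)=0$ while $N,D>0$ for $\alpha\in(0,1)$, we have $\Upsilon_{\text{QSP}}(0)=\Upsilon_{\text{QSP}}(1)=0$ and $\Upsilon_{\text{QSP}}>0$ on the open interval, so the continuous function $\Upsilon_{\text{QSP}}$ attains an interior maximum. The feasible critical point lying in $(0,1)$ is therefore the maximizer, the other root being either infeasible or a minimizer; this is why the statement identifies $\alpha^{\ast}$ as one of the two roots. A direct comparison of $\Upsilon_{\text{QSP}}$ at the two candidates, or a sign inspection of $N'D-ND'$ across them, then fixes the choice of the $\pm$ sign.
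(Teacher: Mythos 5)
Your proposal is correct and follows essentially the same route as the paper: use the monotonicity of $\log(1+x)$ to reduce the problem to maximizing $\Upsilon_{\text{QSP}}$, observe that $\Upsilon_{\text{QSP}}$ vanishes at $\alpha=0$ and $\alpha=1$ so an interior maximizer exists, and obtain it by solving the quadratic equation $\frac{d}{d\alpha}\Upsilon_{\text{QSP}}=0$ and selecting the root in $(0,1)$. The paper states this in one line, whereas you additionally spell out why $N'D-ND'$ is genuinely quadratic (cancellation of the cubic terms) and how the discriminant factors into the form under the square root in~\eqref{eq:opt_alpha_value}, which is just the bookkeeping the paper leaves implicit.
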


\begin{proof}
	By the concavity of $\log(1+x)$ and noticing that the first derivative of $R_{\text{QSP}}$ w.r.t. $\alpha$ changes sign due to $R_{\text{QSP}}|_{\alpha=0}=R_{\text{QSP}}|_{\alpha=1}=0$ (i.e., $R_{\text{QSP}}$ is non-monotonic), $\alpha^{\ast}$ can be obtained by solving the quadratic equation $\frac{d}{d\alpha} \Upsilon_{\text{QSP}}=0$ for $\alpha$ and choosing the positive root less than 1.
\end{proof}

\subsection{Unquantized superimposed pilots}
As indicated earlier, the UQSP can be treated as special case of QSP, therefore, we have the following corollary.
\begin{corollary} [lower bound]
	\label{pro:Rate_SP}
	a lower bound on achievable rate under UQSP system when LMMSE and MRC are used for channel and data estimation, respectively, is
	\begin{equation} \label{eq: MRC_Rate_UQSP}
		R_{{\text{UQSP}}} = \log\left ( 1+\Upsilon_{\text{UQSP}}  \right) \quad ({\text{bits/s/Hz}}),
	\end{equation}
	where the effective SNR, $\Upsilon_{\text{UQSP}}$, is given by
	\begin{IEEEeqnarray}{rCl} \nonumber
		&&\Upsilon_{\text{UQSP}} = \Big({\alpha  \bar {\alpha} \rho^2 T^2 M}\Big) \Big/\Big(\bar {\alpha} \rho ^2 K T M  + \alpha  \rho(K \rho+1)T^2\\ \nonumber
		&&+\> 2 \alpha  \bar {\alpha}  \rho^2 T  + \bar {\alpha}  \rho^2 K^2 T + (2-\alpha) \rho K T + T + \bar {\alpha}^2 \rho ^2 K \Big).\\
		\label{eq:SNR_UQSP_singlecell}
	\end{IEEEeqnarray}
	Moreover, the optimal power fraction $ \alpha^{\ast} \in (0,1)$ maximizing $R_{\text{UQSP}}$~\eqref{eq: MRC_Rate_UQSP} is given by one of the following two roots:
	\begin{equation}
		\alpha^{\ast} = \frac{\delta \pm \sqrt{\delta   (  T\rho+1) (K \rho+1)T}}{\rho  \left(K^2 \rho  T+K \left(M \rho  T+\rho -\rho  T^2+T\right)-T^2\right)}
	\end{equation}
	where $\delta = T \left(K^2 \rho ^2+K \rho  (M \rho +2)+1\right)+K \rho ^2$.
\end{corollary}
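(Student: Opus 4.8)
The plan is to obtain Corollary~\ref{pro:Rate_SP} as the specialization of Theorem~\ref{thm: MRC_Rate_QSP} to the unquantized regime, where (by Lemma~\ref{lemma:UQSP}) the received signal is recovered by setting $\gamma = 1$, $\sigma_z^2 = 0$, and $\mf z[t] = \mf 0$, so that $\mf R = \mf Y$. Rather than trying to push these values through the already $\pi$-laden expression~\eqref{eq:SNR_QSP_singlecell}, I would re-run the decomposition behind Theorem~\ref{thm: MRC_Rate_QSP} directly on the unquantized MRC output: starting from~\eqref{eq:process_sig} with $\gamma=1$ and $\xi$ replaced by its UQSP value $\sqrt{\alpha\rho}/(\alpha\rho T + \bar\alpha\rho K + 1)$, write $\hat s_k[t] = a_k s_k[t] + \epsilon_k[t]$ with $a_k = \E\{ g_k \}$ the deterministic signal gain, $g_k = (\xi\sqrt{\bar\alpha\rho}/M)\,\mf c_k^{\TT}\mf Y^{\HH}\mf h_k$, and $\epsilon_k[t]$ the zero-mean residual that is uncorrelated with $s_k[t]$. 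The standard worst-case (Gaussian) noise argument then gives $R_{\text{UQSP}} = \log(1+\Upsilon_{\text{UQSP}})$ with $\Upsilon_{\text{UQSP}} = |a_k|^2 / \E\{|\epsilon_k[t]|^2\}$.

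The computation then reduces to evaluating a handful of moments of the jointly Gaussian quantities $\mf H$, $\mf S$, $\mf W$. The key simplification relative to Theorem~\ref{thm: MRC_Rate_QSP} is that, with $\mf z[t]=\mf 0$, every QN cross-term disappears and Assumption~3 is never invoked, so each expectation is an \emph{exact} Gaussian moment; this is precisely why~\eqref{eq: MRC_Rate_UQSP} holds with equality rather than approximately. I would expand $\mf c_k^{\TT}\mf Y^{\HH}\mf h_k$ and $\mf c_k^{\TT}\mf Y^{\HH}\tilde{\mf r}[t]$ using the pilot orthogonality $\mf c_k^{\HH}\mf c_{k'} = T\delta_{kk'}$ (Fourier pilots with $\sum_t|c_k[t]|^2=T$), the mutual independence of $\mf H$, $\mf S$, $\mf W$, and the moment identities $\E\{\mf h_k\mf h_k^{\HH}\} = \mf I_M$ and $\E\{\mf H\mf H^{\HH}\} = K\mf I_M$, together with Isserlis/Wick for the fourth-order products. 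Collecting the resulting terms and substituting $\bar\alpha = 1-\alpha$ yields~\eqref{eq:SNR_UQSP_singlecell}.

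For the optimal power fraction, since $\log(1+x)$ is increasing it suffices to maximize $\Upsilon_{\text{UQSP}}$ over $\alpha$. With $\bar\alpha = 1-\alpha$ the numerator of~\eqref{eq:SNR_UQSP_singlecell} is proportional to $\alpha(1-\alpha)$ while the denominator is affine in $\alpha$, and one checks $R_{\text{UQSP}}|_{\alpha=0} = R_{\text{UQSP}}|_{\alpha=1}=0$ (no pilot leaves the channel unestimated, while no data carries no information), so a maximizer must lie in the open interval $(0,1)$. Setting $\frac{d}{d\alpha}\Upsilon_{\text{UQSP}} = 0$ clears to a quadratic in $\alpha$; solving it and keeping the root in $(0,1)$ delivers the stated $\alpha^{\ast}$ with $\delta = T(K^2\rho^2 + K\rho(M\rho+2)+1)+K\rho^2$, exactly as in Corollary~\ref{cor:optimal_alpha_QSP}.

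The main obstacle is the denominator moment, i.e.\ the effective-noise variance $\E\{|\epsilon_k[t]|^2\}$, because $\mf Y$ is correlated with both the detected symbol $s_k[t]$ and the channel $\mf h_k$ that appears in $a_k$; the signal/noise split is therefore not a clean independence decomposition, and the self-interference contribution of $s_k[t]$ together with the pilot--data fourth-order cross terms must be tracked exactly. The favorable point is that, QN being absent, all of these are honest Gaussian moments with no arcsine law and no i.i.d.-QN approximation, so the bookkeeping, though lengthy, is entirely mechanical and exact.
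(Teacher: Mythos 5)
Your proposal is correct and takes essentially the same route as the paper: the paper obtains this corollary by substituting $\gamma=1$, $\sigma_z=0$ and $f(t,t)=\E\{\norm{\mf y[t]}^4\}/M^2=(M+1)\sigma_y^4/M$ (since $\mf r[t]=\mf y[t]$) into the general effective-noise variance~\eqref{eq:noise_var_QSP} of Appendix~\ref{app: proof_thm1} — which is precisely the exact-Gaussian-moment computation you propose to re-run from scratch — and then maximizes the resulting SNR over $\alpha$ exactly as in Corollary~\ref{cor:optimal_alpha_QSP}. One minor slip: the denominator of~\eqref{eq:SNR_UQSP_singlecell} is quadratic rather than affine in $\alpha$ (note the terms $2\alpha\bar\alpha\rho^2T$ and $\bar\alpha^2\rho^2K$), but since the cubic contributions to $N'D-ND'$ cancel, the stationarity condition is still the quadratic you describe and your conclusion stands.
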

\begin{proof}
	The proof unfolds by substituting $\gamma = 1, \sigma_z = 0$ and $f(t,t) = (M+1)\sigma_y^2/M$ (due to $\mf r[t] = \mf y[t])$ in~\eqref{eq:noise_var_QSP} to obtain the normalized variance of effective noise $\tilde{\sigma}_{\epsilon_k}^2$, then substituting $\tilde{\sigma}_{\epsilon_k}^2$  in~\eqref{eq:Stnd_lowerbound},~\eqref{eq: MRC_Rate_UQSP} follows. The proof of second part follows the same lines of the proof of Corollary~\ref{cor:optimal_alpha_QSP}.
\end{proof}

Unlike the derivation of~\eqref{eq: MRC_Rate_QSP}, where some approximations are used due to the QN (see the proof of Theorem\ref{thm: MRC_Rate_QSP}), no approximations are required in the derivation of~\eqref{eq: MRC_Rate_UQSP}.

\subsection{Asymptotic analysis}
\label{subsec:Asymptotic analysis_singlecell}
So far, we have obtained the achievable rates for QSP and UQSP systems, however; it seems difficult to obtain a useful mathematical expression of the gap between QSP and UQSP when a general set of system parameters is considered. To get the feeling of this gap, we consider the asymptotic behavior of both systems when the number of BS antennas is asymptotically large. 

We first observe that the common numerator, $\alpha \bar {\alpha} \rho^2 T^2 M$, in $\Upsilon_{\text{QSP}}$~\eqref{eq:SNR_QSP_singlecell} and $\Upsilon_{\text{UQSP}}$~\eqref{eq:SNR_UQSP_singlecell} can be seen as the unnormalized signal power at MRC output which scales with $M$. Second, the denominators (i.e., total noise power) consist of two main kinds of noise: 1) \textit{coherent noise} (first term) which results from data and pilot interferences added constructively at the MRC output and hence their variance scales with $M$, in the same way as the signal power does, and 2) \emph{noncoherent noise} (remaining terms) resulting from all cross-correlation terms (data, pilot, AWGN and QN) which add destructively at MRC output with a variance which doesn't scale with $M$ and hence vanishes as $M \to \infty$. 

Clearly, the coherent noise in both systems is the same which is given by $\bar {\alpha} \rho ^2 K T M$. As a result, we have the following conclusion on the asymptotic performance comparison between QSP and UQSP.
\begin{corollary}
	\label{cor:R_lim_M_singlecell}
	The asymptotic limit of data rates for QSP and UQSP when $M\to \infty$ is
	\begin{equation}
		\label{eq:R_lim_M_singlecell}
		R_{\text{QSP}}^{M \to \infty} =R_{\text{UQSP}}^{M \to \infty}= \log \left ( 1 +\frac{\alpha T}{K} \right ).
	\end{equation}
\end{corollary}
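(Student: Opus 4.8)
The plan is to exploit the decomposition of the two denominators that the preceding paragraph already isolated: each denominator splits into a single \emph{coherent-noise} term that scales linearly in $M$ and a block of \emph{noncoherent-noise} terms that are independent of $M$. First I would divide both the numerator and the denominator of each effective SNR by $M$. In $\Upsilon_{\text{QSP}}$~\eqref{eq:SNR_QSP_singlecell} this turns the numerator into $\alpha\bar\alpha\rho^2 T^2$, turns the leading (coherent) denominator term into $\bar\alpha\rho^2 K T$, and leaves every remaining summand carrying a factor $1/M$, so that each of them tends to $0$ as $M\to\infty$. Performing the same normalization on $\Upsilon_{\text{UQSP}}$~\eqref{eq:SNR_UQSP_singlecell} yields an identical surviving numerator and denominator, precisely because the two expressions share both the common numerator $\alpha\bar\alpha\rho^2 T^2 M$ and the common coherent-noise term $\bar\alpha\rho^2 K T M$; all the terms that involve the quantization parameters $\gamma$ and $\sigma_z^2$ reside in the noncoherent block and are annihilated in the limit.

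Taking the limit then gives
\[
\lim_{M\to\infty}\Upsilon_{\text{QSP}}=\lim_{M\to\infty}\Upsilon_{\text{UQSP}}=\frac{\alpha\bar\alpha\rho^2 T^2}{\bar\alpha\rho^2 K T}=\frac{\alpha T}{K},
\]
where the factors $\bar\alpha$ and $\rho^2 T$ cancel. Finally, since $x\mapsto\log(1+x)$ is continuous, I would pass the limit through the logarithm in~\eqref{eq: MRC_Rate_QSP} and~\eqref{eq: MRC_Rate_UQSP}, obtaining $R_{\text{QSP}}^{M\to\infty}=R_{\text{UQSP}}^{M\to\infty}=\log(1+\alpha T/K)$, which is exactly~\eqref{eq:R_lim_M_singlecell}.

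The computation is routine; the only point requiring genuine care is the bookkeeping that justifies calling the first summand the \emph{unique} $M$-scaling term in each denominator. The main obstacle is therefore to confirm, term by term, that every quantity grouped as noncoherent noise is $O(1)$ in $M$, equivalently that no hidden $M$-dependence is smuggled in through the fixed system parameters $\rho$, $\alpha$, $T$ and $K$ (which are held constant as $M\to\infty$). Once this is verified, the normalization by $M$ is well defined, the noncoherent contributions vanish rather than surviving or diverging, and the conclusion follows at once: the effective SNRs of QSP and UQSP become asymptotically indistinguishable, so the quantization gap closes as $M\to\infty$.
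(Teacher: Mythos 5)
Your proposal is correct and follows essentially the same route as the paper: the paper's proof simply takes the limits of \eqref{eq: MRC_Rate_QSP} and \eqref{eq: MRC_Rate_UQSP} as $M\to\infty$ with all other parameters fixed, relying on the preceding observation that the common numerator $\alpha\bar\alpha\rho^2T^2M$ and the common coherent-noise term $\bar\alpha\rho^2KTM$ are the only $M$-scaling quantities, so the noncoherent (quantization-dependent) terms vanish after normalization by $M$. Your explicit division by $M$ and the term-by-term check that nothing else depends on $M$ is exactly the bookkeeping the paper leaves implicit.
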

\begin{proof}
	The result follows from taking the limits of both~\eqref{eq: MRC_Rate_QSP} and~\eqref{eq: MRC_Rate_UQSP} when $M\to \infty$ while  all other parameters are fixed.
\end{proof}

The implication of Corollary~\ref{cor:R_lim_M_singlecell} is that, under a quantized channel with superimposed pilots, the QN is averaged out asymptotically. This can be explained as follows. With SP, the channel estimate is highly correlated with received signal since the channel estimate is not independent of data (i.e., see Remark~\ref{rem:CE_data_depednecy}). This gives rise to a coherent-noise term (mainly due to data interference from concurrent users) as discussed previously. On the other hand, there is a noncoherent noise which includes the effect of QN. Whenever the coherent noise is dominated by the noncoherent noise, there's a gap between QSP and UQSP. This occurs when $M$ is not sufficiently large. However, as $M$ increases, the coherent noise starts to dominate the effect of noncoherent noise, thus the gap between QSP and UQSP becomes smaller and smaller until it vanishes asymptotically.

\section{ One-bit quantized multicell massive MIMO}
\label{sec:sec_5}
In this section, we extend the previous results of the single-cell case to a multicell case with a network comprised of $L$ cells and $K$ single-antenna users per cell.

\subsection{Signal model}
For ease of analysis, we assume that the coherence time of the channel can accommodate $KL$ mutually orthogonal pilot sequences, i.e., no pilot contamination is assumed under SP~\cite{7865983}. We also assume that each $K$ users in each cell use power control based on statistical channel inverse as with single-cell scenario.
In the following, we shall use the same notation as before, while introducing indexes of cells. The channel between user $k$ in cell $j$ and the $m$-th antenna of BS $l$ is defined as $\beta_{l j k} h_{l m j k}$. The pilot and data symbols associated with user $k$ in cell $j$, sent during time $t$, are denoted by $c_{j k}[t]$ and $s_{j k}[t]$, respectively. The transmit power of user $k$ in cell $j$ is denoted by $\rho_{j k}$ and the power fractions allocated to pilot and data symbols are denoted by $\alpha_{j k}$ and $\bar \alpha_{j k}$, respectively, such that $\alpha_{j k} + \bar \alpha_{j k}=1$.

Thus the received complex-baseband signal at the $m$th antenna of BS $l$ during time $t$ can be written as
\begin{equation}
	\label{eq:Multicell_BB}
	y_{l m}[t] = \sum_{j=0}^{L-1} \sum_{k=1}^{K}   h_{l m j k}  (\sqrt{\eta_{ljk}} c_{j k}[t] +\sqrt{\bar \eta_{l j k}} s_{j k}[t])+  w_{l m}[t],
\end{equation}
where $\eta_{ljk} = \alpha_{jk}\rho_{jk} \beta_{ljk}$ and $\bar \eta_{l j k} = \bar \alpha_{j k}\rho_{jk} \beta_{ljk}$.  According to the power-control policy, $\rho_{jk}$ is chosen to be $\rho_{jk}= {\beta_{jjk}^{-1}} \rho$, where $\rho$ is some fixed power requirement. As indicated previously, with i.i.d. Rayleigh fading, the average power received at the $j$-th BS from each of its $K$ users is the same, given by $M\rho$.

Let $\theta_{ljk} \triangleq {\beta_{ljk}}/{\beta_{jjk}}$ be the ratio between cross (w.r.t. BS $l$) and direct (w.r.t. to BS $j$) large-scale fading coefficients of user $k$ and, as a result, $\eta_{l j k}$ and $\bar \eta_{l j k}$ reduce to
\begin{equation}
	\eta_{ljk} = \alpha \rho \theta_{ljk}, \qquad \bar \eta_{ljk}= \bar \alpha \rho \theta_{ljk}.
\end{equation}
Since the results are averaged over many realizations (rather than a single realization) of small-scale and large-scale fading coefficients, the statistic of $\{\theta_{ljk}\}$ shall be independent of cell index. This is true due to the random locations of users within each cell which make this assumption reasonable and hence there is no reason to believe they are different for different cell indexes (see~\cite{7438738}). Therefore, without loss of generality, in the sequel, we only focus on BS $0$ as the target BS, i.e., $l=0$.

From~\eqref{eq:Multicell_BB}, the variance of received signal is given by
\begin{align}
	\sigma_{y_{0}}^2  = \kappa_0 \rho+1,
\end{align}
where $\kappa_0 $ is defined by
\begin{equation}
	\label{eq:kappa0}
	\kappa_0 \triangleq  K + \sum_{j=1}^{L-1} \sum_{k=1}^{K}  \theta_{0jk} .
\end{equation}
In addition to~\eqref{eq:kappa0}, we also need the following definition:
\begin{equation}
	\label{eq:kappa1}
	\kappa_1 \triangleq K +  \sum_{j=1}^{L-1} \sum_{k=1}^{K}  \theta_{0jk}^2.
\end{equation}
Note that $\kappa_0$ and $\kappa_1$ are random variables (i.e., depending on users' distances from the target BS), thus we define the following statistics (expected values), which will be used later, as follows:
\begin{equation}
	\label{eq:network_geometry_statistics}
	\zeta_1 \triangleq \E\{\kappa_0\},\quad \zeta_2 \triangleq \E\{\kappa_0^2\}, \quad \zeta_3 \triangleq \E\{\kappa_1\}.
\end{equation}

By the Bussgang decomposition as discussed in Sec.~\ref{sec:sec_3} (also see Fig.~\ref{fig:one-bit system}(b)), the 1-bit quantized version of~\eqref{eq:Multicell_BB} is
\begin{equation}
	\label{eq:multicell_QBB}
	r_{0 m}[t] =  \sqrt{\gamma^{\prime} } y_{0 m}[t]  + z_{0 m}[t],
\end{equation}
where $z_{0 m}[t]$ is the QN and ${\gamma^{\prime} }$ is a scaling factor given by
\begin{equation}
	\label{eq:gamma_muliplecell}
	\gamma^{\prime} = \frac{2}{\pi \sigma_{y_{0}}^2}.
\end{equation}

Let $\mf h_{0 j k}= (h_{0 1 j k}, h_{0 2 j k},\cdots,h_{0 M j k})^{\TT}$ be the vector of small-scale fading gains between user $k$ in cell $j$  and all antennas of BS $0$. Denote by $\mf y_{0}[t]=(y_{0 1}[t],\cdots,y_{0 M}[t])^{\TT}$ the vector of all unquantized signals received at all antennas of BS $0$ during time $t$, $\mf r_{0}[t] = (r_{0 1} [t] ,\cdots,r_{0 M}[t])^{\TT}$ its corresponding quantized version and $\mf z_{0}[t] = (z_{0 1} [t] ,\cdots,z_{0 M}[t])^{\TT}$ \& $\mf w_{0}[t] = (w_{0 1} [t] ,\cdots, w_{0 M}[t])^{\TT}$  the associated QN and AWGN vectors, respectively. We also define $\mf s_{j k}=(s_{jk}[1],\cdots,s_{jk}[T])^{\TT}$ and $\mf c_{j k}=(c_{jk}[1],\cdots,c_{jk}[T])^{\TT}$ as the data  and pilot sequences of user $k$ in the $j$-th cell, respectively. Moreover, we denote by $\mf D_0 = {\text{diag}} \left(\theta_{001}^{\frac{1}{2}}, \cdots,\theta_{0 L-1 K}^{\frac{1}{2}} \right)\in \mathbb{R}^{K L\times KL}$ the diagonal matrix accounting for large-scale fading coefficients, $\mf H_0 = \left(\mf h_{00 1},\cdots,\mf h_{0 L-1 K} \right)\in \mathbb{C}^{M\times KL}$ the composite small-scale fading matrix, $\mf Y_0=\left(\mf y_{0}[t],\cdots, \mf y_{0}[T]\right)\in \mathbb{C}^{M\times T}$ the composite matrix of all received unquantized signals at all antennas during one coherence time, $\mf R_0=\left(\mf r_{0}[t],\cdots, \mf r_{0}[T]\right)\in \mathcal{A}^{M\times T}$ its quantized version, $\mf C = \left(\mf c_{0 1}, \cdots, \mf c_{L-1 K}\right)^{\TT} \in \mathbb{C}^{KL\times T}$ the composite pilot matrix, $\mf S = \left(\mf s_{0 1},\cdots, \mf s_{L-1 K}\right)^{\TT}\in \mathbb{C}^{KL\times T}$ the composite matrix of all data symbols and $\mf W_0=\left(\mf w_{0}[t],\cdots, \mf w_{0}[T]\right)$ \& $\mf Z_0=\left(\mf z_{0}[t],\cdots, \mf z_{0}[T]\right) \in \mathbb{C}^{M\times T}$ are matrices of AWGN and QN, respectively.

From the above discussion, $\mf Y_0 $ can be written as
\begin{equation}
	\mf Y_0 = \sqrt{\alpha \rho } \mf H_0  \mf D_0 \mf C + \sqrt{\bar \alpha \rho } \mf H_0  \mf D_0 \mf S  + \mf W_0,
\end{equation}
and its quantized version as
\begin{align}
	\label{eq:QBB_multiplecell} 
	\mf R_0 = \sqrt{ \gamma^{\prime} } \mf Y_0 + \mf Z_0.
\end{align}

\subsection{Channel estimation}
Let $\hat{ \mf h}_{0 0 k}$ be the channel estimate of user $k$ in the $0$-th cell. Then by the standard results on LMMSE, we can readily show that $\hat{\mf h}_{0 0 k}$ is given by
\begin{equation}
	\hat{\mf h}_{0 0 k} = {\xi^{\prime}} \mf R_0 \mf c_{ 0 k}^{\ast}.
\end{equation}
where
\begin{equation}\label{eq:xi_mutiplecell}
	{\xi^{\prime}} \triangleq \frac{\sqrt{\alpha \rho \gamma^{\prime} }}{ \alpha \rho \gamma^{\prime}   T + \bar \alpha \gamma^{\prime} \rho \kappa_0 + \gamma^{\prime}  + \sigma_z^2 }.
\end{equation}
As indicated previously, we are interested in averaging the results w.r.t. large-scale fading. The following lemma gives an upper bound on the average of variance of estimation error ${\sigma_{\tilde{h}}^{\prime 2}}$.
\begin{lemma}
	\label{lem:MSE}
	When the QN is i.i.d., the expected value (w.r.t. large-scale fading) of ${\sigma_{\tilde{h}}^{\prime 2}}$ is upper bounded by
	\begin{equation}\label{MSE_LB}
		\E \{{\sigma_{\tilde{h}}^{\prime 2}}\} \le 1- \frac{\alpha \rho}{ \alpha \rho  + \frac{(\bar \alpha   + \sigma_z^2 \pi /2)\rho \zeta_1+ \sigma_z^2 \pi /2 +1}{T} }.
	\end{equation}
\end{lemma}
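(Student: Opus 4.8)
The plan is to first obtain the variance of estimation error \emph{conditioned} on the large-scale fading coefficients $\{\theta_{0jk}\}$, and then to reduce the required averaging to a single application of Jensen's inequality. First I would observe that the multicell estimation model~\eqref{eq:QBB_multiplecell}--\eqref{eq:xi_mutiplecell} is structurally identical to the single-cell model underlying Lemma~\ref{lemma:QSPS}, with the interference count $K$ replaced by the random quantity $\kappa_0$ of~\eqref{eq:kappa0} and the Bussgang gain $\gamma$ replaced by $\gamma'$. Hence, left-multiplying $\mf R_0$ by $\mf c_{0k}^{\HH}$ and following the same computation as in the proof of Lemma~\ref{lemma:QSPS} yields the conditional variance of estimation error
\[
	{\sigma_{\tilde{h}}^{\prime 2}} = 1 - \frac{\alpha\rho}{\alpha\rho + \dfrac{\bar\alpha\rho\kappa_0 + \sigma_z^2/\gamma' + 1}{T}},
\]
which is precisely~\eqref{eq: CEE_QSP} under the substitution $(K,\gamma)\mapsto(\kappa_0,\gamma')$.

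Next I would eliminate $\gamma'$ in favour of $\kappa_0$. Using $\gamma' = 2/(\pi\sigma_{y_0}^2)$ from~\eqref{eq:gamma_muliplecell} together with $\sigma_{y_0}^2 = \kappa_0\rho + 1$ gives $\sigma_z^2/\gamma' = \sigma_z^2\pi(\kappa_0\rho+1)/2$. Substituting this and collecting the terms linear in $\kappa_0$, the inner numerator becomes $(\bar\alpha + \sigma_z^2\pi/2)\rho\kappa_0 + \sigma_z^2\pi/2 + 1$. The point of this step is that, because $\gamma'$ is itself a function of $\kappa_0$, the conditional MSE is now expressed purely in terms of the \emph{single} scalar random variable $\kappa_0$, which is what makes a clean averaging possible.

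The heart of the argument is then a concavity/Jensen step. Writing ${\sigma_{\tilde{h}}^{\prime 2}} = \phi(g(\kappa_0))$ with the affine map $g(\kappa_0) = \alpha\rho + [(\bar\alpha+\sigma_z^2\pi/2)\rho\kappa_0 + \sigma_z^2\pi/2 + 1]/T$ and $\phi(x) = 1 - \alpha\rho/x$, I would verify that $\phi''(x) = -2\alpha\rho/x^3 < 0$ on $(0,\infty)$, so that $\phi$ is concave there; since $g$ is affine and $g(\kappa_0)>0$ on the support of $\kappa_0$, the composition $\phi\circ g$ is concave in $\kappa_0$. Jensen's inequality for concave functions then gives $\E\{\phi(g(\kappa_0))\}\le \phi(g(\E\{\kappa_0\}))$, and substituting $\E\{\kappa_0\}=\zeta_1$ from~\eqref{eq:network_geometry_statistics} produces exactly the claimed bound~\eqref{MSE_LB}. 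I expect the main (indeed the only) subtlety to be pinning down the direction of the inequality, i.e., confirming \emph{concavity} rather than convexity of the conditional MSE in $\kappa_0$, since this is precisely what makes the Jensen step deliver an \emph{upper} bound; the derivation of the conditional MSE itself is routine once Lemma~\ref{lemma:QSPS} is in hand.
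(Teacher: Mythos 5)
Your proposal is correct and follows essentially the same route as the paper: compute the conditional error variance from the LMMSE orthogonality principle (equivalently, by specializing Lemma~\ref{lemma:QSPS} with $(K,\gamma)\mapsto(\kappa_0,\gamma')$), rewrite $\sigma_z^2/\gamma'$ via $\gamma'=2/(\pi\sigma_{y_0}^2)$ so the expression depends on $\kappa_0$ alone, observe concavity in $\kappa_0$, and apply Jensen's inequality with $\E\{\kappa_0\}=\zeta_1$. Your write-up is in fact more explicit than the paper's on the two points that matter --- that $\gamma'$ must itself be expanded as a function of $\kappa_0$ before averaging, and that concavity follows from composing the concave map $x\mapsto 1-\alpha\rho/x$ with an affine function --- but the argument is the same.
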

\begin{proof}
	From the orthogonality principle of LMMSE, the variance of estimation error per single realization of $k_0$ is ${\sigma_{\tilde{h}}^{\prime 2}} = 1-T\sqrt{\alpha \rho \gamma^{\prime} } {\xi^{\prime}}$. Note that ${\sigma_{\tilde{h}}^{\prime 2}}$ is a concave function of $\kappa_0$. Thus, taking the expectation of ${\sigma_{\tilde{h}}^{\prime 2}}$ w.r.t. $\kappa_0$ and making use of Jensen's inequality,~\eqref{MSE_LB} follows.
\end{proof}
As discussed previously (see discussion of Lemma~\ref{lemma:QSPS}), the effect of QN, AWGN and data interference is scaled down by a factor of $T$; $\E \{{\sigma_{\tilde{h}}^{\prime 2}}\} \to 0$ as $T\to \infty$. Further, there is a saturation effect as $\rho\to \infty$ (while $T$ is fixed) as the effect of data interference and QN persists to exist, suggesting the saturation of data rate as will be seen in Sec.~\ref{sub:Asymptotic analysis mutliple cell}. Increasing pilot power leads always to an improvement of channel estimate quality while data rate may decrease, however, the relation between channel estimate quality and the achievable data rate is not monotonic as discussed next. This implies that there is a trade-off between channel estimate quality and data rate dictated by power allocation strategy.

It should be noted that~\eqref{MSE_LB} serves as an approximate upper bound (with some artifact effect) as there will still be some correlation between QN components. It turns out that the bound~\eqref{MSE_LB} is nearly tight in many cases (especially in the low-SNR and large-$K$ regimes) as demonstrated by numerical results, i.e., see Fig.~\ref{fig:MSE}.
\subsection{Analysis of achievable rates}
Now we present an approximate lower bound on the achievable rate for QSP under multicellular case with MRC implemented at BSs and no pilot removal after estimating the channel is performed. We next extend the result for the unquantized system. 

Without loss of generality we can assume an arbitrary user $k$ in cell $0$ as our target user. Thus, the scaled MRC output during time $t$ is 
\begin{equation} \label{eq:multicell_MRC_out}
	\hat{s}_{0 k} [t]\triangleq \frac{1}{M}\hat{\mf h}_{0 0 k}^{\HH} {\mf r}_{0} [t]=\frac{\xi^{\prime}}{M} \mf c_{0 k}^{\TT} \mf R_0^{\HH} {\mf r}_{0}[t].
\end{equation}
Based on~\eqref{eq:multicell_MRC_out}, a closed-form approximation on the achievable rate is given in the following theorem. 
\begin{thm}[approximate lower bound]
	\label{thm: MRC_Rate_QSP_multiplecell}
	Consider 1-bit QSP multicell massive MIMO where QN is assumed to be i.i.d., and LMMSE channel estimator and MRC receiver are employed at the BS. If power control based on statistical channel inverse is applied, then a lower bound on the achievable rate in uplink is approximated by
	\begin{equation} \label{eq: MRC_Rate_QSP_multiplecell}
		R_{\text{QSP}}^{\prime} \approx \log \left( 1+ \Upsilon_{\text{QSP}}^{\prime} \right) \quad (\text{bits/s/Hz}),
	\end{equation}
	where $\Upsilon_{\text{QSP}}^{\prime}$ is given by~\eqref{eq:SNR_multiplecell_QSP}, shown at the top of page~\pageref{eq:SNR_multiplecell_QSP}.
\end{thm}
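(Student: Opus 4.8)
The plan is to follow the same template that proves Theorem~\ref{thm: MRC_Rate_QSP}, now carrying the cross-to-direct large-scale ratios $\theta_{0jk}$ through every moment and averaging the result over the network geometry by means of the statistics $\zeta_1,\zeta_2,\zeta_3$ in~\eqref{eq:network_geometry_statistics}. First I would split the scaled MRC output~\eqref{eq:multicell_MRC_out} into a deterministic-gain signal term plus an uncorrelated effective-noise term, $\hat{s}_{0 k}[t]=a_{0 k}\,s_{0 k}[t]+\epsilon_{0 k}[t]$, by setting $a_{0 k}\triangleq \E\{\cdot\}$ of the coefficient multiplying $s_{0 k}[t]$ and absorbing into $\epsilon_{0 k}[t]$ everything else: the fluctuation of that coefficient about its mean, the intra- and inter-cell data interference, the pilot interference, the scaled AWGN and the QN. One then checks $\E\{\epsilon_{0 k}[t]\,s_{0 k}^{\ast}[t]\}=0$, which is exactly what legitimizes the standard worst-case (Gaussian) noise lower bound on $I(s_{0 k}[t];\hat{s}_{0 k}[t])$, yielding $R_{\text{QSP}}^{\prime}\ge \log\!\left(1+|a_{0 k}|^2/\E\{|\epsilon_{0 k}[t]|^2\}\right)$ and hence $\Upsilon_{\text{QSP}}^{\prime}=|a_{0 k}|^2/\E\{|\epsilon_{0 k}[t]|^2\}$.

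Next I would evaluate the signal gain $a_{0 k}$. Substituting the Bussgang form~\eqref{eq:QBB_multiplecell} into $a_{0 k}=\tfrac{\xi^{\prime}\sqrt{\bar\alpha\rho\gamma^{\prime}}}{M}\,\E\{\mf c_{0 k}^{\TT}\mf R_0^{\HH}\mf h_{0 0 k}\}$, the pilot orthogonality, the normalization $\theta_{0 0 k}=1$ for the target user, and Lemma~\ref{lemm:ch_QN_Correlation} (QN uncorrelated with the channel) collapse the expectation to a single surviving term. Together with the definition of $\xi^{\prime}$ in~\eqref{eq:xi_mutiplecell}, this reproduces, up to the $\theta$-averaging, the same scaling as the single-cell case, so that $|a_{0 k}|^2$ contributes the numerator $\bar\alpha\,\alpha\,\rho^2 T^2 M$ of~\eqref{eq:SNR_multiplecell_QSP}.

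The main obstacle is the denominator $\E\{|\epsilon_{0 k}[t]|^2\}$, because $\mf R_0$ enters~\eqref{eq:multicell_MRC_out} \emph{quadratically} (once through $\hat{\mf h}_{0 0 k}=\xi^{\prime}\mf R_0\mf c_{0 k}^{\ast}$ and once through the column $\mf r_0[t]$), and $\mf R_0$ is moreover correlated with the desired symbol $s_{0 k}[t]$. Expanding the variance therefore forces fourth-order moments of $\mf R_0=\sqrt{\gamma^{\prime}}\mf Y_0+\mf Z_0$: the Gaussian part $\mf Y_0$ is handled exactly by the Isserlis (Wick) factorization, whereas the genuinely difficult cross-moments couple the QN $\mf Z_0$ with $\mf Y_0$—quantities that are uncorrelated (Bussgang) but not independent. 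This is precisely where the i.i.d.-QN approximation (Assumption~3) is invoked, replacing the intractable arcsine-law correlations by $\sigma_z^2$, and is what makes the bound \emph{approximate}. I would organize the surviving terms into a coherent part (data and pilot interference that add constructively and scale with $M$, producing $\bar\alpha\rho^2\kappa_0 T M$-type contributions) and a noncoherent part (all remaining cross-correlations, which do not scale with $M$), take the expectation over the large-scale fading by substituting $\E\{\kappa_0\}=\zeta_1$, $\E\{\kappa_0^2\}=\zeta_2$ and $\E\{\kappa_1\}=\zeta_3$ wherever products of the $\theta_{0jk}$ appear, and finally assemble $\Upsilon_{\text{QSP}}^{\prime}=|a_{0 k}|^2/\E\{|\epsilon_{0 k}[t]|^2\}$ into the closed form~\eqref{eq:SNR_multiplecell_QSP}, with the single-cell expression of Theorem~\ref{thm: MRC_Rate_QSP} recovered when $L=1$ (so that $\kappa_0=\kappa_1=K$ and $\zeta_1=\zeta_3=K$, $\zeta_2=K^2$).
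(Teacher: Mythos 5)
Your proposal follows essentially the same route as the paper's Appendix~B: the orthogonality decomposition $\hat{s}_{0k}[t]=a_{0k}s_{0k}[t]+\epsilon_{0k}[t]$, the worst-case Gaussian noise bound, evaluation of the fourth-order moments of $\mf R_0=\sqrt{\gamma^{\prime}}\mf Y_0+\mf Z_0$ with the Gaussian part done exactly and the QN cross-moments handled via the i.i.d.\ Assumption~3, and finally the substitution of $\zeta_1,\zeta_2,\zeta_3$ for the moments of $\kappa_0$ and $\kappa_1$. The only step you gloss over is how the large-scale average enters the rate expression: the paper first writes the per-realization bound and then applies Jensen's inequality to the convex function $\log(1+1/\tilde{\sigma}_{\epsilon_{0k}}^2)$ so that the expectation lands on the noise variance in the denominator, and it defines $a_{0k}=\E\{s_{0k}^{\ast}[t]\hat{s}_{0k}[t]\}$ (rather than the mean of the coefficient multiplying $s_{0k}[t]$) so that the orthogonality $\E\{\epsilon_{0k}[t]s_{0k}^{\ast}[t]\}=0$ holds by construction despite $\mf R_0$ depending on $s_{0k}[t]$.
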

\begin{proof}
	See Appendix~\ref{app: proof_thm2}.
\end{proof}

\begin{figure*}[!btp]
	\normalsize
	\begin{IEEEeqnarray}{rCl}  \nonumber
		\Upsilon_{\text{QSP}}^{\prime}&=&\Big({\alpha  \bar{\alpha}   \rho ^2 T^2 M}\Big) \Big / \Big(  \bar {\alpha} \rho ^2 M T  \zeta_3 + \frac{1}{4}\left( \alpha ^2 \rho ^2-8 \alpha  \rho ^2+4 \rho ^2\right) \zeta _3 + \frac{1}{4} \left(\pi ^2 \rho ^2 T -\pi ^2 \rho ^2 - 4 \alpha  \rho ^2 T \right)\zeta_2\\
		&&+ \frac{1}{4} \left(4 \alpha  \rho ^2 T^2 -2 \pi ^2 \rho  - 4 \alpha  \rho  T+2 \pi ^2 \rho  T\right)\zeta_1+ \frac{1}{4} \left(4 \alpha  \rho  T^2-8 \alpha ^2 \rho ^2 T+ 8 \alpha  \rho ^2 T+\pi ^2 T-\pi^2\right)  \Big)
		\label{eq:SNR_multiplecell_QSP}
	\end{IEEEeqnarray}
	\hrulefill
	\vspace*{0pt}
\end{figure*}

A corollary of Theorem~\ref{thm: MRC_Rate_QSP_multiplecell} which maximizes $R_{\text{QSP}}^{\prime}$ w.r.t. $\alpha$ is the following.
\begin{corollary} 
	\label{cor:optimal_alpha_QSP_multicell}
	The optimal power fraction $\alpha^{\ast} \in (0,1)$ which maximizes~\eqref{eq: MRC_Rate_QSP_multiplecell} is given by one of the two roots:
	\begin{equation}  \label{eq:optimal_alpha_QSPMultiplecells}
		\alpha^{\ast}=\frac{-\delta \pm \sqrt{\delta  \delta^{\prime}}}{4 \rho  \left(\zeta_1 T (\rho  T-1)-\rho  T (\zeta_2+\zeta_3 M)-\zeta_3 \rho +T^2\right)}
	\end{equation}
	where $\delta = \pi ^2 (T-1) \left(2 \zeta_1 \rho +\zeta_2 \rho ^2+1\right)+4 \zeta_3 M \rho ^2 T-4 \zeta_3 \rho ^2$ and
	$\delta^{\prime} =\pi ^2 (T-1) \left(2 \zeta_1 \rho +\zeta_2 \rho ^2+1\right)+4 \rho  T (\zeta_1 \rho  T-\zeta_1-\zeta_2 \rho +T)$.
\end{corollary}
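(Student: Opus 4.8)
The plan is to follow the same strategy used for Corollary~\ref{cor:optimal_alpha_QSP} in the single-cell case, since $R_{\text{QSP}}^{\prime}=\log(1+\Upsilon_{\text{QSP}}^{\prime})$ and $\log(1+x)$ is strictly increasing in $x$; hence maximizing $R_{\text{QSP}}^{\prime}$ over $\alpha\in(0,1)$ is equivalent to maximizing the effective SNR $\Upsilon_{\text{QSP}}^{\prime}$ given in~\eqref{eq:SNR_multiplecell_QSP}. First I would establish that $\Upsilon_{\text{QSP}}^{\prime}$ is non-monotonic with an interior maximizer. Writing $\bar\alpha=1-\alpha$, the numerator $\alpha\bar\alpha\rho^2 T^2 M$ vanishes at both $\alpha=0$ (through the factor $\alpha$) and $\alpha=1$ (through the factor $\bar\alpha$), while the denominator remains strictly positive at these endpoints; consequently $R_{\text{QSP}}^{\prime}|_{\alpha=0}=R_{\text{QSP}}^{\prime}|_{\alpha=1}=0$ and $R_{\text{QSP}}^{\prime}>0$ for $\alpha\in(0,1)$. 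By continuity, $R_{\text{QSP}}^{\prime}$ attains a maximum in the open interval at a stationary point, so it suffices to solve $\frac{d}{d\alpha}\Upsilon_{\text{QSP}}^{\prime}=0$.

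Next I would compute this stationary-point condition. After substituting $\bar\alpha=1-\alpha$ into~\eqref{eq:SNR_multiplecell_QSP}, both the numerator $N(\alpha)$ and the denominator $D(\alpha)$ become quadratic polynomials in $\alpha$ with coefficients depending on $\rho,T,M$ and the geometry statistics $\zeta_1,\zeta_2,\zeta_3$. Using the quotient rule, the stationarity condition reduces to $N^{\prime}(\alpha)D(\alpha)-N(\alpha)D^{\prime}(\alpha)=0$. Because $N$ and $D$ are each of degree two, the leading cubic terms in this expression cancel identically, leaving a genuine quadratic equation $A\alpha^2+B\alpha+C=0$ in $\alpha$. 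Solving by the quadratic formula and collecting terms yields precisely the two roots in~\eqref{eq:optimal_alpha_QSPMultiplecells}, with the discriminant and leading coefficient organizing themselves into the quantities $\delta$ and $\delta^{\prime}$ stated in the corollary. The optimizer $\alpha^{\ast}$ is then the unique root lying in $(0,1)$, which is guaranteed to exist and to be a maximizer by the endpoint argument of the previous paragraph.

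I expect the main obstacle to be the algebraic bookkeeping in forming $N^{\prime}D-ND^{\prime}$ and reorganizing its coefficients into the compact closed form $-\delta\pm\sqrt{\delta\delta^{\prime}}$ over the stated denominator: the cancellation of the cubic term and the clean factoring of the discriminant as the product $\delta\delta^{\prime}$ are delicate and most safely verified with symbolic computation. A secondary point requiring care is confirming which of the two roots falls in $(0,1)$; here the fact that $R_{\text{QSP}}^{\prime}$ vanishes at both endpoints and is positive inside ensures that exactly one admissible root corresponds to the global maximum (the other being either extraneous or an interior minimum), so no separate second-derivative test is needed.
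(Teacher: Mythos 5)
Your proposal is correct and follows essentially the same route as the paper: the authors likewise note that $R_{\text{QSP}}^{\prime}$ vanishes at $\alpha=0$ and $\alpha=1$, hence is non-monotonic with an interior maximizer, and obtain $\alpha^{\ast}$ by solving the quadratic stationarity equation $\frac{d}{d\alpha}\Upsilon_{\text{QSP}}^{\prime}=0$ and selecting the root in $(0,1)$ (their proof simply refers back to the single-cell Corollary~\ref{cor:optimal_alpha_QSP}). Your added remarks on the cancellation of the cubic term in $N^{\prime}D-ND^{\prime}$ and on root selection are consistent elaborations of that same argument.
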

\begin{proof}
	The proof follows by following same lines of the proof of Corollary~\ref{cor:optimal_alpha_QSP}.
\end{proof}

For the sake of comparison, we introduce the following conclusion on the achievable rate for UQSP.
\begin{corollary}[lower bound]
	\label{cor:MRC_Rate_UQSP_multiplecell}
	Under UQSP, a lower bound on the achievable rate in uplink is
	\begin{equation}
		\label{eq: MRC_Rate_UQSP_multiplecell}
		R_{\text{UQSP}}^{\prime} = \log (1+ \Upsilon_{\text{UQSP}}^{\prime})
	\end{equation}
	where $\Upsilon_{\text{UQSP}}^{\prime}$ is defined as
	\begin{IEEEeqnarray}{rCl}
		\nonumber
		&&\Upsilon_{\text{UQSP}}^{\prime} =\Big({\alpha  \bar \alpha  \rho ^2 T^2 M }\Big)  \Big/ \Big(\bar {\alpha}\rho^2 M T \zeta_3 + \bar {\alpha}^2 \rho ^2\zeta_3  + \bar \alpha   \rho ^2 T \zeta_2 \\ \nonumber
		&&+\> \left(\alpha  \rho ^2 T^2  + \alpha  \rho  T+2 \bar \alpha   \rho  T \right)\zeta_1  + \alpha  \rho  T^2 + 2 \alpha  \bar \alpha   \rho ^2 T + T \Big).\\
		\label{eq:SNR_multiplecell_UQSP}
	\end{IEEEeqnarray}
	Furthermore, the optimal power fraction $\alpha^{\ast} \in (0,1)$ which maximizes~\eqref{eq: MRC_Rate_UQSP_multiplecell} is given by one of the following two roots:
	\begin{equation}
		\label{eq:optimal_alpha_UQSP_multicell}
		\alpha^{\ast} = \frac{-\delta \pm \sqrt{  T (\rho \zeta_1 +1) (\rho  T+1) \delta} }{ \rho T (\rho  T-1)\zeta_1 -  \rho ^2 T \zeta_2 - \rho  (M \rho  T+\rho )\zeta_3 + \rho  T^2},
	\end{equation}
	where $\delta = 2 \rho  T \zeta_1 +  \rho ^2 T \zeta_2 +    (\rho ^2+M \rho^2 T)\zeta_3 + T$.
\end{corollary}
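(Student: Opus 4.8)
The plan is to obtain both parts of the corollary directly from the QSP multicell result in Theorem~\ref{thm: MRC_Rate_QSP_multiplecell}, exploiting the fact that UQSP is simply QSP with the quantizer switched off. Formally, the condition $\mf R_0 = \mf Y_0$ corresponds to $\gamma^{\prime}=1$ and $\sigma_z^2=0$ (compare~\eqref{eq:QBB_multiplecell} with $\mf Z_0 = \mf 0$), so the first step is to substitute these two values into every quantity entering the proof of Theorem~\ref{thm: MRC_Rate_QSP_multiplecell} in Appendix~\ref{app: proof_thm2}, mirroring exactly the reduction used for the single-cell case in Corollary~\ref{pro:Rate_SP}.

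The crucial simplification occurs in the fourth-order moment terms. In the quantized case these are handled through the arcsine law (or, under the i.i.d.-QN assumption, through the correction captured by the diagonal autocorrelation $f(t,t)$); once $\mf r_0[t]=\mf y_0[t]$ the signal is jointly Gaussian, the Gaussian moment (Isserlis) factorization applies exactly, and the diagonal term collapses to $(M+1)\sigma_{y_0}^2/M$, precisely as in the single-cell derivation. I would insert $\gamma^{\prime}=1$, $\sigma_z^2=0$, and this Gaussian value into the effective-noise-variance expression from Appendix~\ref{app: proof_thm2}, obtain the normalized variance $\tilde{\sigma}_{\epsilon_k}^2$, feed it into the standard worst-case-noise lower bound, and then collect terms in the network statistics $\zeta_1,\zeta_2,\zeta_3$ to recover $\Upsilon_{\text{UQSP}}^{\prime}$ in~\eqref{eq:SNR_multiplecell_UQSP}. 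Because the Gaussian effective noise is genuinely uncorrelated with the signal, the approximation steps that the QN had forced in Theorem~\ref{thm: MRC_Rate_QSP_multiplecell} disappear, so the resulting $R_{\text{UQSP}}^{\prime}$ is a true, not merely approximate, lower bound.

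The main obstacle I anticipate is bookkeeping rather than anything conceptual: one must track which of the many summands in $\Upsilon_{\text{QSP}}^{\prime}$~\eqref{eq:SNR_multiplecell_QSP} carry explicit factors of $\gamma^{\prime}$ or $\sigma_z^2$ and which arise from the arcsine correction, so that after the substitution the surviving terms regroup exactly into the six summands of the UQSP denominator. Verifying this algebraic coincidence term by term (with $\bar \alpha = 1-\alpha$) is the tedious part; it is reassuring that the coherent-noise term $\bar {\alpha}\rho^2 MT\zeta_3$ is common to both expressions, which pins down the dominant $O(M)$ scaling and serves as a consistency check throughout the reduction.

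For the optimal power fraction I would follow the argument of Corollary~\ref{cor:optimal_alpha_QSP} verbatim. Since the numerator of $\Upsilon_{\text{UQSP}}^{\prime}$ carries the factor $\alpha \bar \alpha = \alpha(1-\alpha)$, we have $R_{\text{UQSP}}^{\prime}|_{\alpha=0}=R_{\text{UQSP}}^{\prime}|_{\alpha=1}=0$, so by monotonicity of $\log(1+x)$ the maximizer is an interior stationary point. Both numerator and denominator are quadratic in $\alpha$, and because the numerator factors through $\alpha(1-\alpha)$ the cubic term in $\frac{d}{d\alpha}\Upsilon_{\text{UQSP}}^{\prime}=0$ cancels, leaving a quadratic equation in $\alpha$; solving it and selecting the root lying in $(0,1)$ yields~\eqref{eq:optimal_alpha_UQSP_multicell}.
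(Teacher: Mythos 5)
Your proposal follows essentially the same route as the paper: the paper likewise derives the corollary by setting $\sigma_z=0$, $\gamma^{\prime}=1$ and the Gaussian fourth-moment value of $f^{\prime}(t,t)$ in the effective-noise variance~\eqref{eq:normalized_noise_multiplecell} of Appendix~\ref{app: proof_thm2}, substituting the result into~\eqref{eq:avg_rate_multicell} (where the Jensen step over the large-scale fading is what introduces $\zeta_1,\zeta_2,\zeta_3$ while preserving a true lower bound), and reusing the argument of Corollary~\ref{cor:optimal_alpha_QSP} for $\alpha^{\ast}$. The only slip is the diagonal term: since $f^{\prime}(t,t)=\E\{\norm{\mf y_0[t]}^4\}/M^2$, Lemma~\ref{lem: xxx0} gives $(M+1)\sigma_{y_0}^4/M$ rather than your $(M+1)\sigma_{y_0}^2/M$ (the paper's own single-cell Corollary~\ref{pro:Rate_SP} contains the same typo).
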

\begin{proof}
	From Appendix~\ref{app: proof_thm2}, setting $\sigma_z=0, \gamma^{\prime}=1$ and $f^{\prime}(t,t)=(M+1)\sigma _{y_0}^4/M$ in~\eqref{eq:normalized_noise_multiplecell} and substituting the result in~\eqref{eq:avg_rate_multicell},~\eqref{eq: MRC_Rate_UQSP_multiplecell} follows. The second part follows the same lines of proof as of Corollary~\ref{cor:optimal_alpha_QSP}.
\end{proof}

\subsection{Asymptotic analysis}
\label{sub:Asymptotic analysis mutliple cell}
Having obtained the achievable rates for the quantized and unquantized systems, it is interesting to know the gap between them. To that end, we use the asymptotic analysis as we have done previously for the single-cell case. With the single-cell case, we have classified the noise at the MRC output into coherent and noncoherent noises.  Likewise, we can see that the denominators (total noise power at MRC output) of $\Upsilon_{\text{QSP}}^{\prime}$~\eqref{eq:SNR_multiplecell_QSP}  and $\Upsilon_{\text{UQSP}}^{\prime}$~\eqref{eq:SNR_multiplecell_UQSP} are partitioned into \textit{coherent} and \textit{noncoherent} noises, where the latter in $\Upsilon_{\text{QSP}}^{\prime}$  includes the overall effect of QN which does not scale with $M$. From~\eqref{eq:SNR_multiplecell_QSP} and~\eqref{eq:SNR_multiplecell_UQSP}, the first term of both denominators, which is the same, corresponds to the coherent noise given by $\bar {\alpha} \rho ^2 M T  \zeta_3$. The following corollary is immediate.
\begin{corollary}
	\label{cor:R_lim_M}
	The asymptotic limit of data rates for QSP and UQSP when $M\to \infty$ is
	\begin{equation}
		\label{eq:R_lim_M_multicell}
		R_{\text{QSP}}^{\prime M \to \infty} =R_{\text{UQSP}}^{\prime M \to \infty}= \log \left ( 1 +\frac{\alpha T}{\zeta_3} \right ).
	\end{equation}
\end{corollary}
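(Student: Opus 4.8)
The plan is to take the limit $M \to \infty$ in both $\Upsilon_{\text{QSP}}^{\prime}$ from \eqref{eq:SNR_multiplecell_QSP} and $\Upsilon_{\text{UQSP}}^{\prime}$ from \eqref{eq:SNR_multiplecell_UQSP} and show that both effective SNRs converge to the same value $\alpha T / \zeta_3$, whence the rates coincide by continuity of $\log(1+\cdot)$. The structure of both expressions is a ratio $(\alpha\bar\alpha\rho^2 T^2 M)/(\text{denominator})$, where the numerator scales linearly in $M$ and the denominator is a sum of a single \emph{coherent} term scaling linearly in $M$ (namely $\bar\alpha\rho^2 M T\zeta_3$ in both cases) plus a collection of \emph{noncoherent} terms that are all $O(1)$ in $M$. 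The key observation, already emphasized in the text preceding the statement, is that the coherent noise terms in the two denominators are \emph{identical}, which is what forces the two limits to agree.

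First I would divide numerator and denominator of each effective SNR by $M$. In the numerator this leaves $\alpha\bar\alpha\rho^2 T^2$. In the denominator, the coherent term $\bar\alpha\rho^2 M T\zeta_3$ becomes $\bar\alpha\rho^2 T\zeta_3$, which survives the limit, while every remaining (noncoherent) term carries no factor of $M$ and therefore is divided by $M$ and vanishes as $M\to\infty$. Concretely, for $\Upsilon_{\text{QSP}}^{\prime}$ the terms involving $\zeta_3$ without $M$, the $\zeta_2$ terms, the $\zeta_1$ terms, and the constant block in \eqref{eq:SNR_multiplecell_QSP} are all $O(1)$, hence each is $O(1/M)$ after the division and drops out; the identical argument applies to the $\bar\alpha^2\rho^2\zeta_3$, $\bar\alpha\rho^2 T\zeta_2$, $\zeta_1$, and residual terms in \eqref{eq:SNR_multiplecell_UQSP}. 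Thus both ratios tend to
\begin{equation}
	\lim_{M\to\infty}\Upsilon_{\text{QSP}}^{\prime}=\lim_{M\to\infty}\Upsilon_{\text{UQSP}}^{\prime}=\frac{\alpha\bar\alpha\rho^2 T^2}{\bar\alpha\rho^2 T\zeta_3}=\frac{\alpha T}{\zeta_3}.
\end{equation}

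Substituting this common limit into \eqref{eq: MRC_Rate_QSP_multiplecell} and \eqref{eq: MRC_Rate_UQSP_multiplecell}, and using the continuity of $x\mapsto\log(1+x)$ to pass the limit through the logarithm, yields \eqref{eq:R_lim_M_multicell}. I would note that all parameters other than $M$ (namely $\alpha$, $\rho$, $T$, and the geometry statistics $\zeta_1,\zeta_2,\zeta_3$) are held fixed, so the cancellation of the $\bar\alpha\rho^2 T$ factors between numerator and the coherent term is legitimate and independent of the particular network realization. There is no real analytical obstacle here: the result is essentially a degree-count in $M$, and the only thing worth stating carefully is \emph{why} the two limits agree, namely that the single $M$-scaling term in each denominator is the \emph{same} coherent-noise expression $\bar\alpha\rho^2 M T\zeta_3$ — the QN contribution lives entirely in the noncoherent (non-scaling) part of $\Upsilon_{\text{QSP}}^{\prime}$ and is therefore averaged out in the limit. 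The mild point to keep honest is that $\Upsilon_{\text{QSP}}^{\prime}$ is itself only an approximation (Assumption~3 on i.i.d.\ QN), so the equality in \eqref{eq:R_lim_M_multicell} is an asymptotic statement about the approximate bound rather than about the exact rate.
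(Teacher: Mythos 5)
Your proposal is correct and follows exactly the paper's (implicit) argument: the corollary is stated as immediate from the observation that both denominators share the single $M$-scaling coherent term $\bar\alpha\rho^2 MT\zeta_3$ while all remaining terms are $O(1)$, so dividing through by $M$ gives the common limit $\alpha T/\zeta_3$. Your added caveat that the QSP statement inherits the approximate nature of $\Upsilon_{\text{QSP}}^{\prime}$ is a fair and accurate refinement, but the substance of the argument is the same.
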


Corollary~\ref{cor:R_lim_M} immediately implies that in quantized multicell massive MIMO scenarios with superimposed pilots, QN can be averaged out asymptotically in $M$. Note that Corollary~\ref{cor:R_lim_M} is similar to Corollary~\ref{cor:R_lim_M_singlecell} except the factor $\zeta_3$ which accounts for the inter-cell interference. Interpreting Corollary~\ref{cor:R_lim_M} follows the same logic made under Corollary~\ref{cor:R_lim_M_singlecell} in Sec.~\ref{subsec:Asymptotic analysis_singlecell}.

We remark that the achievable rate $R_{\text{UQSP}}^{\prime}$ in Corollary~\ref{cor:MRC_Rate_UQSP_multiplecell} and the asymptotic rate in Corollary~\ref{cor:R_lim_M} for the unquantized system can be derived as special cases of the results in~\cite{2017arXiv170907722V}  under our network settings as discussed previously.

Moreover, we have the following result:
\begin{corollary}
	\label{cor:R_lim_SNR}
	The asymptotic limits of data rates for QSP and UQSP when $\rho\to \infty$ are, respectively, given by
	\begin{align}
		R_{\text{QSP}}^{\prime \rho \to \infty} &= \log \left ( 1+\Upsilon_{\text{QSP}}^{\prime \rho\to \infty} \right), \label{eq:R_lim_SNR_QSP} \\
		R_{\text{UQSP}}^{\prime \rho \to \infty} &= \log\left ( 1+\Upsilon_{\text{UQSP}}^{\prime \rho \to \infty} \right), \label{eq:R_lim_SNR_UQSP}
	\end{align}
	where $\Upsilon_{\text{QSP}}^{\prime \rho\to \infty}$ and $\Upsilon_{\text{UQSP}}^{ \prime \rho \to \infty}$ are defined by
	\begin{subequations}
		\begin{align}\nonumber
			\Upsilon_{\text{QSP}}^{\prime \rho\to \infty}   &= \left({\bar{\alpha } \alpha  M T^2}\right) \big / \big(  \alpha  T^2 \zeta _1 - \frac{1}{4}\left(4 \alpha  T - \pi ^2 T + \pi ^2\right) \zeta _2 \\
			&  \bar{\alpha} \left(\bar{\alpha } +   M T\right) \zeta _3 + 2 \bar{\alpha } \alpha  T  \big), \\
			\Upsilon_{\text{UQSP}}^{\prime \rho \to \infty} &= \frac{\alpha \bar{\alpha } M T^2}{ \alpha T^2 \zeta _1 + \bar{\alpha } T \zeta _2+  \bar{\alpha }(\bar{\alpha } +M T) \zeta _3 + 2 \alpha \bar{\alpha }   T }.
		\end{align}
	\end{subequations}
\end{corollary}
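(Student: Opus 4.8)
The plan is to prove Corollary~\ref{cor:R_lim_SNR} by directly evaluating $\lim_{\rho\to\infty}\Upsilon_{\text{QSP}}^{\prime}$ and $\lim_{\rho\to\infty}\Upsilon_{\text{UQSP}}^{\prime}$ from the closed forms~\eqref{eq:SNR_multiplecell_QSP} and~\eqref{eq:SNR_multiplecell_UQSP}, holding $\alpha,T,M,\zeta_1,\zeta_2,\zeta_3$ fixed, and then invoking continuity of $x\mapsto\log(1+x)$ to pass the limit into~\eqref{eq: MRC_Rate_QSP_multiplecell} and~\eqref{eq: MRC_Rate_UQSP_multiplecell}. The first observation I would make is structural: in each SNR both the numerator and the denominator are polynomials in $\rho$ whose highest degree is two. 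The numerators $\alpha\bar\alpha\rho^2 T^2 M$ are pure $\rho^2$ monomials, while the denominators are genuine quadratics in $\rho$ with a nonzero $\rho^2$ coefficient for every $\alpha\in(0,1)$. Consequently each limit exists, is finite and strictly positive, and equals the ratio of the leading (order-$\rho^2$) coefficients.

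Carrying this out, I would divide the numerator and denominator of each SNR by $\rho^2$ and let $\rho\to\infty$, so that every monomial scaling as $\rho^0$ or $\rho^1$ is annihilated. For $\Upsilon_{\text{QSP}}^{\prime}$ this discards the $\rho$-free constant $\tfrac14(\pi^2 T-\pi^2)$ together with all order-$\rho$ terms, such as $\tfrac14(-2\pi^2-4\alpha T+2\pi^2 T)\rho\,\zeta_1$ and $\alpha\rho T^2$, leaving only the $\rho^2$-coefficients. Collecting these, the surviving coefficient of $\zeta_1$ is $\alpha T^2$; that of $\zeta_2$ is $\tfrac14(\pi^2 T-\pi^2-4\alpha T)=-\tfrac14(4\alpha T-\pi^2 T+\pi^2)$; the $\rho$-free bracket contributes $\tfrac14(8\alpha T-8\alpha^2 T)=2\alpha\bar\alpha T$ after substituting $1-\alpha=\bar\alpha$; and the $\zeta_3$-coefficient, which is split across the two groups $\bar\alpha MT$ and $\tfrac14(\alpha^2-8\alpha+4)$, I would simplify to $\bar\alpha(\bar\alpha+MT)$. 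Substituting these into the ratio yields $\Upsilon_{\text{QSP}}^{\prime\rho\to\infty}$, and the identical division-by-$\rho^2$ argument applied to~\eqref{eq:SNR_multiplecell_UQSP}---which is simpler because it carries no QN or $\pi^2$ contributions---yields $\Upsilon_{\text{UQSP}}^{\prime\rho\to\infty}$.

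The only real work is the bookkeeping: one must correctly sift the $\rho^2$-order monomials out of the bracketed groups of~\eqref{eq:SNR_multiplecell_QSP} (noting in particular that the $\zeta_3$ contribution lives in two distinct groups) and perform the analogous grouping for~\eqref{eq:SNR_multiplecell_UQSP}. The one nonmechanical simplification is collapsing $-8\alpha^2 T+8\alpha T$ to $8\alpha\bar\alpha T$ via $\bar\alpha=1-\alpha$; beyond that there is no analytical obstacle, since the matching $\rho^2$ leading behavior of numerator and denominator already guarantees the existence and positivity of both limits. Finally, inserting $\Upsilon_{\text{QSP}}^{\prime\rho\to\infty}$ and $\Upsilon_{\text{UQSP}}^{\prime\rho\to\infty}$ into $\log(1+\cdot)$ delivers~\eqref{eq:R_lim_SNR_QSP} and~\eqref{eq:R_lim_SNR_UQSP}, completing the proof of Corollary~\ref{cor:R_lim_SNR}.
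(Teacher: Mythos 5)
Your proposal matches the paper's own (one-line) proof exactly: hold $\alpha,T,M,\zeta_1,\zeta_2,\zeta_3$ fixed, divide numerator and denominator of~\eqref{eq:SNR_multiplecell_QSP} and~\eqref{eq:SNR_multiplecell_UQSP} by $\rho^2$, keep the leading coefficients, and pass the limit through $\log(1+\cdot)$ by continuity. One caveat on your bookkeeping: the $\zeta_3$ collapse you assert, $\bar\alpha MT+\tfrac14(\alpha^2-8\alpha+4)=\bar\alpha(\bar\alpha+MT)$, is not an identity as printed --- it requires reading the first term of that group as $\tfrac14(4\alpha^2-8\alpha+4)=\bar\alpha^2$, i.e.\ there is an evident typo in~\eqref{eq:SNR_multiplecell_QSP} (the UQSP case, and all other coefficients, check out exactly).
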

\begin{proof}
	The results follows from taking the limits of~\eqref{eq: MRC_Rate_QSP_multiplecell} and~\eqref{eq: MRC_Rate_UQSP_multiplecell} when $\rho \to \infty$.
\end{proof}
Corollary~\ref{cor:R_lim_SNR} is a natural consequence of the saturation effect of LMMSE due to QN and data interference which scale with $\rho$ (see the discussion of Lemma~\ref{lem:MSE}). Equations~\eqref{eq:R_lim_M_multicell},~\eqref{eq:R_lim_SNR_QSP} and~\eqref{eq:R_lim_SNR_UQSP} can be further maximized w.r.t. $\alpha$ as explained in the proof of Corollary~\ref{cor:optimal_alpha_QSP}. For instance, by inspecting Corollary~\ref{cor:R_lim_M}, it is interesting to note that when $M \to \infty$, the optimal policy to maximize the data rate is to allocate all power to pilot, i.e., $\alpha \to 1$.

We remark here that the phenomenon of the continuous increase of pilot power with the increase of the number of BS antennas was also observed in~\cite{7865983,2017arXiv170907722V} under the unquantized MIMO channel with SP scheme.

\section{Numerical results}
\label{sec:sec_6}
In this section, we present some numerical results to compare the analytical bounds in Theorem~\ref{thm: MRC_Rate_QSP_multiplecell}, Corollaries~\ref{cor:optimal_alpha_QSP_multicell}-\ref{cor:R_lim_SNR} and Lemma~\ref{lem:MSE} against the results obtained by Monte Carlo (MC) simulation. Furthermore, we include the simulation result for QTP, where we assume $\tau=K$ and no optimization w.r.t. power or training length is performed (non-optimized QTP). With QTP, we assume each user is assigned a pilot sequence randomly from a fixed set of pilot sequences and the pilot sequences of each $K$ users within the same cell are mutually orthogonal. In all MC simulations, the rates are maximized (w.r.t. $\alpha$) per single realization of large-scale fading and thus the final rate is taken to be the average over many such realizations. Moreover, we include the simulation results when PR is implemented after estimating the channel as described by~\eqref{eq:PR_model}.

In our model, we consider a hexagonal network of one tier of BSs, i.e., $L=7$. Each cell has a radius $r_{\text{c}} = 1.8$ Km with a forbidden region of radius $r_{\text{f}} = 0.1$ Km in which no user exists. The channel is assumed flat-block Rayleigh fading with bandwidth $B_{\text {w}}$ = $200$ KHz and each block is transmitted within $1$ ms (i.e., one sub-frame in LTE). This translates into a coherence time $T = 200$ (in symbol intervals). All users are distributed uniformly and randomly within each cell. The large-scale fading coefficient $\beta_{0jk}$ is defined by $\beta_{0jk} = \omega^{-1} d_{0jk}^{-\zeta}$~\cite{7438738}~\cite{2017arXiv170907722V}, where  $d_{0 j k}$ is the distance (in Km) from the $k$-th user in cell $j$ to BS $0$, $\zeta = 3.8$ is the path-loss exponent and $\omega$ is the path-loss at a reference distance of $1$ Km, which also accounts for distance-independent propagation losses such as wall penetration. The statistics $\zeta_1, \zeta_2$ and $\zeta_3$ defined in~\eqref{eq:network_geometry_statistics} which are required for computing the analytical results are obtained by MC simulation, where they converge to fixed values depending on $K$ and the geometry of the network, such as the radii of cell and forbidden region. According to our cell settings, we observe that including more tiers (say, second tier, $L=19$) changes the computed statistics slightly. For convenience, the system parameters are summarized in Table~\ref{tab:simulationparameters}.

\begin{table}
	\caption{Summary of simulation parameters.}
	\label{tab:simulationparameters}
	\setlength{\tabcolsep}{3pt}
	\centering
	\begin{tabular}{l|p{115pt}}
		\hline
		Parameter &
		Value/Description \\
		\hline
		Cell layout & One-tier hexagonal, $L=7$ \\
		System bandwidth ($B_{\text {w}})$ & $200$ KHz \\
		Coherence time ($T$) & 200 symbol intervals (1ms)\\
		Cell radius ($r_{\text{c}}$) & $1.8$ Km \\
		Forbidden region radius ($r_{\text{f}}$)& $0.1$ Km \\
		Path-loss exponent ($\zeta$) & 3.8 \\
		$\zeta_1=\E\{k_0\}$ & $\approx  1.4116 K$ \\
		$\zeta_3=\E\{k_1\}$ & $\approx  1.1656K$ \\
		$\zeta_2=\E\{k_0^2\}$ & $\approx 50.53, 288.6, 450.66,1248.94$ when $K=5,12,15,25$\\ \hline
	\end{tabular}
\end{table}

\begin{table}[htp]
	\caption{The average optimal fraction ${\alpha}^{\ast}$ of total power for different number of BS antennas, $K = 12, T = 200, \text{SNR} = -10 \text{ dB}$.}
	\label{tab:table1}
	\centering
	\begin{tabular}{l|c |c|c|c}
		\hline
		No. of BS antennas $M$&
		50&
		200&
		600&
		1000\\
		\hline
		MC QSP                      &0.34&         0.41&        0.50&     0.59              \\ \hline
		MC QSP PR                    &0.38&        0.43&        0.53&    0.59               \\ \hline
		Anal. QSP~\eqref{eq:optimal_alpha_QSPMultiplecells}       &0.38    &        0.45&        0.55     &0.61 \\ \hline
		
		MC UQSP&0.33&         0.44&        0.57&     0.65\\ \hline
		MC UQSP PR &0.40&        0.49&        0.6&    0.64\\ \hline
		Anal. UQSP~\eqref{eq:optimal_alpha_UQSP_multicell} &0.33&        0.44&        0.56&0.62\\ \hline
	\end{tabular}
\end{table}

In Fig.~\ref{fig:MSE} the empirical variance of channel estimation error is compared with the bound in Lemma~\ref{lem:MSE}. From Fig.~\ref{fig:MSE}, it is clear that the analytical bound serves as a good approximation of the average MSE of the channel estimate for all SNR values, especially when SNR is low. This result is a consequence of the law of large numbers which is explained as follows. From the definition~\eqref{eq:network_geometry_statistics}, we have $\zeta_1 \triangleq \E\{\kappa_0\}=K+\E \{\sum_{j=1}^{L-1} \sum_{k=1}^{K}  \theta_{0jk}\}=K+\sum_{j=1}^{L-1} \sum_{k=1}^{K} \E\{\theta_{0jk}\}=K+K(L-1) \bar \theta$, where the last equality follows from the fact that, the positions (and hence distances) of all users in the network are independent random variables. Further, within the same network tier, the positions of all users (w.r.t. BS 0) are i.i.d. random variables as users are distributed uniformly and randomly within each cell. We next notice that $\bar \theta = \lim_{K\to \infty} \frac{1}{K (L-1)} \sum_{j=1}^{L-1} \sum_{k=1}^{K}  \theta_{0jk}$, implying that $\zeta_1$ approaches a single-realization of the random variable $\kappa_0$ as $K$ gets large, thus the Jensen's inequality in Lemma~\ref{lem:MSE} becomes tight for asymptotically large $K$. Considering more than one tier, the users are partitioned into different groups (each group belongs to one network tier), where the positions of all users within each group are i.i.d. random variables, thus the analysis follows the same logic as before on each group of users.  Finally, we can see that the MSE performance improves as $T$ increases which is an intuitive result as inspected from Lemma~\ref{lem:MSE}.

\begin{figure}
  \centering
   \includegraphics[width=0.4\textwidth]{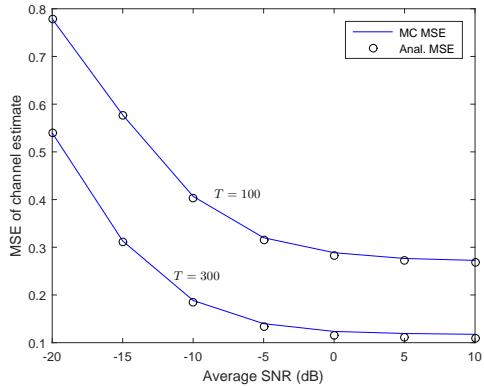}\\
  \caption{Average SNR versus MSE of channel estimate for a fixed power allocation $\alpha =0.5$ in QSP, $L=7, K = 12, M = 100$.}
  \label{fig:MSE}
\end{figure}

Figure~\ref{fig:Rate_SNR} illustrates the average SNR versus the achievable rates for the QSP, UQSP, and QTP. The analytical asymptotes are obtained from Corollary~\ref{cor:R_lim_SNR}. With QSP, we observe the analytical bound~\eqref{eq: MRC_Rate_QSP_multiplecell} works as a good approximation for the rate under QSP, especially when PR is used. Further, the analytical approximation is more accurate in the low-SNR regime and overestimates the achievable rate slightly in other SNR regions. The latter should come as no surprise due to the approximation made in the derivation of the approximate bound~\eqref{eq: MRC_Rate_QSP_multiplecell} of Theorem~\ref{thm: MRC_Rate_QSP_multiplecell}, where the variance of effective noise at the MRC output is underestimated, mainly due to the ideal i.i.d. assumption on QN (see Appendices~\ref{app: proof_thm2} and~\ref{app: proof_thm1}). For the unquantized system with the no-PR assumption, the lower bound~\eqref{eq: MRC_Rate_UQSP_multiplecell} and MC result are almost the same\footnote{Note that when deriving~\eqref{eq: MRC_Rate_UQSP_multiplecell}, no approximations are used as those made under QSP. Thus any gap between the simulation and analytical results is due to Jensen's inequality, see the proof of Corollary~\ref{cor:MRC_Rate_UQSP_multiplecell}.}. Also, it can be seen that QSP outperforms the non-optimized QTP in all simulated cases.

As shown in Fig.~\ref{fig:Rate_SNR}, all rates in all systems saturate and converge as SNR grows large. Undoubtedly, this is an expected result as discussed previously (see discussion of Lemma~\ref{lem:MSE}). With a moderate $M=100$, we observe that the gap between QSP and  UQSP with and without PR assumption is almost the same; ranges from 0.144 bps/Hz at $-20$ dB (very low SNR) to $0.52$ bps/Hz at 10 dB (high SNR), which suggests that with 1-bit QSP, the loss in information is not significant compared with the infinite-resolution case. Thus, when the number of quantization bits increases (say, 2 or 3 bits), this gap becomes less pronounced.
\begin{figure}
  \centering
   \includegraphics[width=0.4\textwidth]{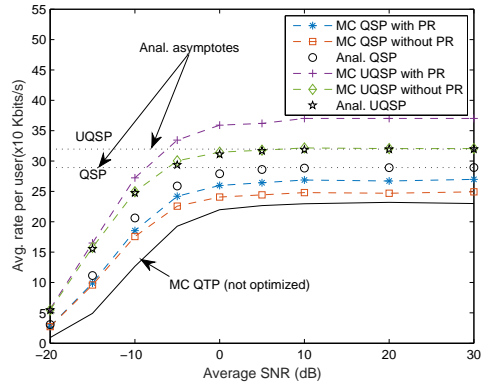}\\
  \caption{The achievable rate vs. SNR for QSP and UQSP, $L=7, K=12, M=100, T=200$.}
  \label{fig:Rate_SNR}
\end{figure}

Figure~\ref{fig:Rate_K} shows the impact of the number of users $K$ on the achievable rate for the quantized system\footnote{Notice that $K$ enters the effective SNR equation~\eqref{eq:SNR_multiplecell_QSP} through the statistics $\zeta_1,\zeta_2$ and $\zeta_3$}. The results indicate that as $K$ increases, the per-user data rate decreases, which is a natural consequence of the increase of data interference of all users. As expected, we observe that when $K$ increases or SNR is low, our analytical approximation becomes more accurate, since by increasing $K$ or working in the low-SNR regime, the accuracy of the i.i.d. assumption on QN will increase (see Assumption 3). In massive MIMO, highly loaded networks with tens of users served simultaneously by each BS is expected~\cite{5595728}, thereby making our analytical bound~\eqref{eq: MRC_Rate_QSP_multiplecell} a good approximation of the performance. Finally, we observe that in almost all cases, QSP outperforms QTP. However, the gap in performance becomes less pronounced as $K$ increases.


\begin{figure}
\centering
   \includegraphics[width=0.4\textwidth]{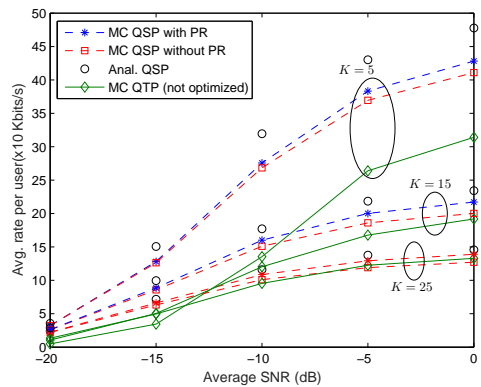}\\
  \caption{The impact of number of users $K$ on the achievable rate for QSP, $L=7, M = 100, T = 200$.}
  \label{fig:Rate_K}
\end{figure}

Figure~\ref{fig:Rate_M} depicts the impact of increasing BS antennas $M$ on the achievable data rate. We also show in Table~\ref{tab:table1} the average power fraction $\alpha$ (of pilot) obtained by simulation for both the quantized and unquantized systems when $M=50,200,600,1000$. As shown in Table~\ref{tab:table1}, the common trend for the optimal value of $\alpha$ is that it increases as $M$ increases. This is consistent with Corollary~\ref{cor:R_lim_M} which predicts such a phenomenon where it was shown that the optimal policy to maximize data rate when $M$ is asymptotically large is to allocate most power to pilot. From Fig.~\ref{fig:Rate_M}, we can see that, for all cases, increasing $M$ gives rise to an increase in data rates. Once again, we observe that QSP outperforms the non-optimized QTP as shown previously.

Obviously, higher data rates can be achieved when the BS employs PR technique using the channel estimate, compared with the no-PR case. However, the gap between the PR and no-PR cases is not significant, especially under the quantized system. Interestingly, the analytical approximation for QSP serves as a good approximation of the achievable data rate, particularly when PR is used at the BS. The reason of this can be explained as follows. In the analytical analysis of QSP, many terms contributing small quantifies to the noise variance are neglected due to the approximations (see, for example, the approximations of $a_{0k}$ in~\eqref{eq:a0k} and $f_2^{\prime}-f_{10}^{\prime}$ in~\eqref{eq:zeta_fn_multicell}), thus the variance of effective noise at MRC output is underestimated. With PR used, the variance of noise is further reduced and hence this reduction of noise variance, in some sense, compensates for the underestimation of noise variance incurred in the analytical bound. This is clear from the closeness between the analytical approximation and simulated QSP with PR. Compared with the unquantized system, in the quantized case more antennas are required to achieve the same data rate, which is expected due to QN. For a small or medium number of BS antennas, we observe that the ratio between the number of antennas for QSP to the number of antennas for UQSP is roughly 2. However, this ratio decreases gradually as $M$ gets larger due to the asymptotic convergence of data rates for both schemes, as will be seen next.

\begin{figure}
  \centering
   \includegraphics[width=0.4\textwidth]{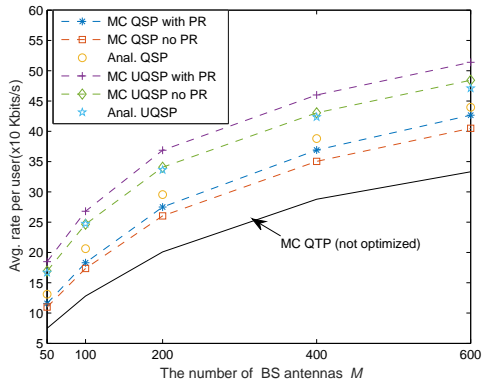}\\
  \caption{The achievable rate vs. no. of BS antennas $M$, $L=7, K = 12, T=200$ and ${\rm{SNR}}=-10\rm{dB}$.}
  \label{fig:Rate_M}
\end{figure}

Figure~\ref{fig:Lim_M} shows the asymptotic behavior of the data rate when $M$ grows large. The analytical asymptote in Corollary~\ref{cor:R_lim_M} is also shown for comparison. As can be seen from Fig.~\ref{fig:Lim_M}, for QSP, UQSP, and QTP systems, the data rates increase with the increase of $M$ and finally converge to fixed values. Specifically, the data rates of QSP and UQSP systems approach the analytical asymptote for a very large number of BS antennas. This implies that, for asymptotically large $M$, quantization incurs no loss of information, when compared with the infinite-resolution counterpart in SP systems. Understanding this interesting phenomenon is discussed in Sec.~\ref{subsec:Asymptotic analysis_singlecell} under Corollary~\ref{cor:R_lim_M_singlecell}.


\begin{figure}
	\centering
	\includegraphics[width=0.4\textwidth]{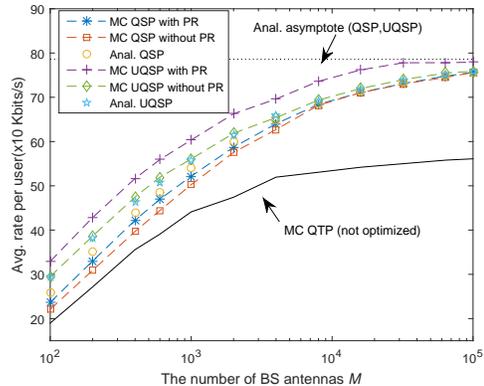}\\
	\caption{The asymptotic behaviour of data rate with increasing the number of BS antennas, $L=7, K = 12, T=200, \text{SNR}=-5 \text{ dB}$.}
	\label{fig:Lim_M}
\end{figure}

\section{Conclusions and Future Work}
\label{sec:sec_7}
In this work, we consider the achievable data rate in the uplink of a 1-bit
quantized massive MIMO system when superimposed pilot
scheme (i.e., QSP) is used. We derive an approximate lower
bound on the achievable rate for QSP with the assumption
of i.i.d. QN. We have also recovered a true lower bound
on its infinite-resolution counterpart (i.e., UQSP) as a special
case of QSP. We have showed that regardless the coarse
quantization and pilot-data superposition, high data rates can
be achieved in practical multicell scenarios. Although there
is a gap between QSP and UQSP due to QN, we show that
the effect of QN diminishes gradually as the number of BS
antennas increases. This is because when SP is used, the
coherent noise at the MRC output dominates the effect of noncoherent
noise which includes the effect of QN. Interestingly,
for asymptotically large $M$, the rates achieved under QSP
and UQSP saturate and converge to the same deterministic
value. Thus we conclude that QSP is interference-limited
rather than QN-limited, asymptotically.

An important result of QSP is that when working in the low-SNR regime or when the number of users is sufficiently
large, pilot elimination after estimating the channel does
not result in a significant increase in information, compared
with the case when no pilot removal is considered. As an
illustrative performance comparison only, we have simulated
the achievable rate for the non-optimized QTP (conventional
training scheme with the smallest number of pilot symbols
and equal power allocations among pilot and data), thereby; it
is demonstrated that it is possible for QSP to outperform QTP
in many cases. These results are far from being conclusive
as rigorous investigations are required when optimization in
QTP is considered alongside the use of more advanced signal
processing techniques such as joint pilot-data processing,
which exceeds the scope of this paper.

\appendix
\begin{appendices}
	
	\section{Proof of Theorem 1}
	\label{app: proof_thm1}
	In this section, we prove Theorem~\ref{thm: MRC_Rate_QSP}. To that end, we need the following lemmas.
	\begin{lemma}\label{lem: xxx0}
		\begin{subequations}
			\begin{align}
				\E \left \{ \norm{\mf y[t]}^2 \right \}&=M\sigma_y^2 \label{sub:y_norm2}, \\
				\E \left \{ \norm{\mf  y[t]}^4 \right \}&= M(M+1)\sigma_y^4 .\label{sub:y_norm4}
			\end{align}
		\end{subequations}
	\end{lemma}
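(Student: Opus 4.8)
The plan is to treat both identities as elementary moment computations for the vector $\mf y[t]$, whose components $\{y_m[t]\}_{m=1}^{M}$ are, by the signal model~\eqref{eq:BB signal scalar} together with Assumption~2, independent and identically distributed $\mathcal{CN}(0,\sigma_y^2)$ random variables. Independence across the antenna index $m$ follows from the i.i.d. channel gains $\{h_{mk}\}$ and the spatially white AWGN, while the common variance $\sigma_y^2 = K\rho + 1$ is fixed by Assumption~2. Writing $\norm{\mf y[t]}^2 = \sum_{m=1}^{M} |y_m[t]|^2$, the first identity~\eqref{sub:y_norm2} is then immediate from linearity of expectation, since $\E\{|y_m[t]|^2\} = \sigma_y^2$ for each of the $M$ terms.

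For the fourth-order identity~\eqref{sub:y_norm4} I would expand the square,
\[
\norm{\mf y[t]}^4 = \Big(\sum_{m=1}^{M} |y_m[t]|^2\Big)^2 = \sum_{m=1}^{M}\sum_{m'=1}^{M} |y_m[t]|^2\,|y_{m'}[t]|^2,
\]
and split the double sum into its diagonal ($m=m'$) and off-diagonal ($m\neq m'$) parts. Each off-diagonal term factors by independence into $\E\{|y_m[t]|^2\}\,\E\{|y_{m'}[t]|^2\} = \sigma_y^4$, contributing $M(M-1)\sigma_y^4$ in total, while each diagonal term is the fourth absolute moment of a circularly symmetric complex Gaussian, $\E\{|y_m[t]|^4\} = 2\sigma_y^4$, contributing $2M\sigma_y^4$. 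Summing the two parts yields $\E\{\norm{\mf y[t]}^4\} = M(M-1)\sigma_y^4 + 2M\sigma_y^4 = M(M+1)\sigma_y^4$, as claimed.

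The only step requiring any care is the fourth-moment fact $\E\{|y_m[t]|^4\} = 2\sigma_y^4$, which hinges on circular symmetry; I would either cite the standard complex-Gaussian moment relation or derive it in one line by writing $y_m[t] = a + jb$ with $a,b \sim \mathcal{N}(0,\sigma_y^2/2)$ independent, so that $\E\{(a^2+b^2)^2\} = \E\{a^4\} + 2\E\{a^2\}\E\{b^2\} + \E\{b^4\} = 2\sigma_y^4$. Equivalently, one can observe that $2\norm{\mf y[t]}^2/\sigma_y^2$ is chi-squared with $2M$ degrees of freedom and simply read off both identities from its known mean $2M$ and variance $4M$; this alternative packages both results at once and is the route I would actually prefer. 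There is no genuine obstacle here — the statement is a routine second- and fourth-moment calculation — so I would keep the argument to these few lines.
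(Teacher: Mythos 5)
Your proposal is correct and follows essentially the same route as the paper, which simply asserts that both identities "follow straightforwardly since the entries of $\mf y[t]$ are assumed to be i.i.d. $\mathcal{CN}(0,\sigma_y^2)$"; you have merely written out the diagonal/off-diagonal expansion and the fourth-moment fact $\E\{|y_m[t]|^4\}=2\sigma_y^4$ that the authors leave implicit. The chi-squared packaging you mention is a pleasant shortcut but not a substantively different argument.
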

	\begin{proof}
		The above results follow straightforwardly since the entries of $\mf y[t]$ are assumed to be i.i.d. $\mathcal{CN} (0,\sigma_y^2)$.
	\end{proof}
	\begin{lemma}\label{lem: xxx1}
		For any two time instants $n$ and $q$, we have
		\begin{subequations}
			\begin{align}
				\lim_{M\to \infty} \frac{ \mf  z^{\HH}[n]  \mf  y[q]}{M} &= \E \left \{z_1^{\ast} [n] y_1[q]\right\} = 0,{}\forall n,q \label{eq:asymp1} \\
				\lim_{M\to \infty} \frac{ \mf  z[n]^{\HH}  \mf  z[q]}{M} &= \left\{ \begin{array}{l}
					\E \left \{z_1^{\ast} [n] z_1[q]\right \} = 0,{} n \ne q\\
					\E \left \{|z_1[n]|^2\right\} = \sigma_z^2,{}  n = q
				\end{array} \right. \label{eq:asymp3}
			\end{align}
		\end{subequations}
	\end{lemma}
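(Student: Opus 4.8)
The plan is to recognize both identities as strong-law-of-large-numbers (SLLN) statements over the $M$ antenna index, and then to evaluate the resulting expectations directly from the correlation structure already established in Sec.~\ref{sec:sec_3}. The key structural fact is that, under Rayleigh fading, the rows of $\mf Y$ are i.i.d.\ across the antenna index $m$; since the 1-bit quantizer acts element-wise, the rows of $\mf R$, and hence of $\mf Z=\mf R-\sqrt{\gamma}\mf Y$, are i.i.d.\ across $m$ as well. Consequently, for any fixed pair of time instants $(n,q)$ the summands $\{z_m^{\ast}[n]y_m[q]\}_{m=1}^{M}$ appearing in $\mf z^{\HH}[n]\mf y[q]=\sum_{m=1}^{M}z_m^{\ast}[n]y_m[q]$ are i.i.d.\ in $m$, and likewise $\{z_m^{\ast}[n]z_m[q]\}_{m=1}^{M}$. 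Note that only same-antenna products enter the inner products, so no cross-antenna correlations are involved.

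For~\eqref{eq:asymp1}, first I would verify integrability so that the SLLN applies: since $y_m[q]\sim\mathcal{CN}(0,\sigma_y^2)$ has finite moments and $z_m[n]=r_m[n]-\sqrt{\gamma}\,y_m[n]$ with $r_m[n]$ bounded (it lives in the finite set $\mathcal{A}$), Cauchy--Schwarz gives $\E|z_m^{\ast}[n]y_m[q]|\le(\E|z_m[n]|^2\,\E|y_m[q]|^2)^{1/2}<\infty$. The SLLN then yields $\tfrac{1}{M}\mf z^{\HH}[n]\mf y[q]\to\E\{z_1^{\ast}[n]y_1[q]\}$, and this expectation vanishes directly by the Bussgang orthogonality~\eqref{qnoise_inp_crosscorrelation}, which asserts $\E\{z_m^{\ast}[t]y_m[t']\}=0$ for all $t,t'$. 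This establishes~\eqref{eq:asymp1}.

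For~\eqref{eq:asymp3}, the identical i.i.d.-plus-integrability argument delivers $\tfrac{1}{M}\mf z^{\HH}[n]\mf z[q]\to\E\{z_1^{\ast}[n]z_1[q]\}$, and the value of the limit is read off from the QN covariance. Under the i.i.d.\ QN assumption (Assumption~3), $\Sigma_{\bar{\mf z}_m}\approx\sigma_z^2\mf I_T$ in~\eqref{eq: noise_covariance}, i.e.\ $\E\{z_m^{\ast}[n]z_m[q]\}=\sigma_z^2$ when $n=q$ and $0$ when $n\ne q$, which is precisely~\eqref{eq:asymp3}.

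The main (and essentially only) subtlety is the applicability of the SLLN, which rests on the i.i.d.-across-antennas property of the rows together with finite second moments; both hold here because the fading is i.i.d.\ Rayleigh and the quantizer output is bounded. Everything else is a direct invocation of the already-derived relations~\eqref{qnoise_inp_crosscorrelation} and~\eqref{eq: noise_covariance}, so once each limit is identified with an expectation there is nothing further to compute.
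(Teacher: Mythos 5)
Your proposal is correct and follows essentially the same route as the paper: both identify each inner product as an average of $M$ i.i.d.\ same-antenna products, invoke the law of large numbers, and read off the limits from the Bussgang orthogonality~\eqref{qnoise_inp_crosscorrelation} and the i.i.d.\ QN covariance~\eqref{eq: noise_covariance}. Your explicit integrability check via boundedness of the quantizer output and Cauchy--Schwarz is a small added rigor the paper omits, but it does not change the argument.
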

	\begin{proof}
		The above results unfold from the fact that each column of $\mf Y$ and $\mf Z$ consists of i.i.d. entries, each with the same mean and variance. Thus each inner product in Lemma~\ref{lem: xxx1} consists of a sum of $M$ i.i.d. entries. By the virtue of the law of large numbers, as $M \to \infty$, the right-hand sides (RHSs) of~\eqref{eq:asymp1} and ~\eqref{eq:asymp3} follow from~\eqref{qnoise_inp_crosscorrelation} and~\eqref{eq: noise_covariance}, respectively.
	\end{proof}
	\begin{lemma}\label{lem: xxx2}
		For any $n \neq q$ and QN is approximately i.i.d., the following conditional densities satisfy (approximately)
		\begin{align}
			p \left(\mf z[n]|\mf z[q]\right)= p\left(\mf z[n]\right) \label{eq:cond1}\\
			p \left (  z[n] | y[q] \right) = p\left (z[n] \right) \label{eq:cond2}.
		\end{align}
	\end{lemma}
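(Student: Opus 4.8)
The plan is to reduce both factorizations \eqref{eq:cond1} and \eqref{eq:cond2} to a single statement: that the input samples $\mf y[n]$ and $\mf y[q]$ are (approximately) statistically independent whenever $n\neq q$. The crucial structural observation is that the one-bit quantizer \eqref{orig_quantized} is \emph{memoryless in time}: $r_m[t]$ is obtained by applying $\mathrm{sign}(\cdot)$ to the real and imaginary parts of $y_m[t]$ alone, so each quantization-noise sample $z_m[t]=r_m[t]-\sqrt{\gamma}\,y_m[t]$ is a deterministic (measurable) function of $y_m[t]$ only. Collecting over antennas, the vector $\mf z[t]=\phi(\mf y[t])$ is a fixed function of $\mf y[t]$. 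Since measurable functions of independent random vectors are themselves independent, approximate independence of $\mf y[n]$ and $\mf y[q]$ immediately yields $p(\mf z[n]\mid\mf z[q])\approx p(\mf z[n])$, which is \eqref{eq:cond1}; and because $\mf z[n]=\phi(\mf y[n])$ involves only $\mf y[n]$, the same input independence gives $p(z[n]\mid y[q])\approx p(z[n])$, which is \eqref{eq:cond2}. Thus the entire content of the lemma is transferred onto the joint law of the Gaussian inputs at distinct times.

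To establish the required input independence I would invoke the Gaussian modeling of Assumption~2 together with the covariance structure \eqref{eq:input_covariance}. For a fixed antenna $\bar{\mf y}_m$ is jointly complex Gaussian, and distinct antennas are mutually independent because the small-scale gains $\{h_{mk}\}$ and the noise $\{w_m[t]\}$ are i.i.d.\ across $m$; hence the cross-covariance between $\mf y[n]$ and $\mf y[q]$ is diagonal in the antenna index, each entry being the off-diagonal $(n,q)$ element of $\Sigma_{\bar{\mf y}_m}$, namely $\alpha\rho\sum_{k=1}^{K}c_k[n]c_k^{\ast}[q]$ (the data and noise contributions vanish for $n\neq q$ since $\{s_k[t]\}$ and $\{w_m[t]\}$ are i.i.d.\ in time). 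For jointly Gaussian variables zero correlation is equivalent to independence, so it suffices to show that this off-diagonal term is negligible relative to the diagonal variance $\sigma_y^2=K\rho+1$ in the regimes of interest.

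The negligibility is exactly where the low-SNR and large-$K$ hypotheses enter, and this is the step I expect to be the main obstacle. In the low-SNR regime the prefactor $\rho$ suppresses the cross term directly. In the large-$K$ regime I would exploit the random-Fourier structure of the pilots: each $c_k[t]$ has unit modulus and pseudo-random phase, so $\sum_{k}c_k[n]c_k^{\ast}[q]$ is a sum of $K$ unit-modulus terms with effectively independent phases and therefore concentrates at $O(\sqrt{K})$, whereas the diagonal grows like $K\rho$; the resulting correlation coefficient decays as $O(1/\sqrt{K})\to 0$. This is precisely the same regime in which the i.i.d.\ approximation of Assumption~3 and the diagonal form \eqref{eq: noise_covariance} are justified, so the lemma is internally consistent with the surrounding development. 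The two delicate points are (i) that independence must be routed through the \emph{Gaussian input} rather than argued for the non-Gaussian $z$ directly---the memoryless dependence $\mf z[t]=\phi(\mf y[t])$ is what legitimizes this---and (ii) a quantitative (concentration or law-of-large-numbers) control of the phase sum $\sum_k c_k[n]c_k^{\ast}[q]$ to make the word \quotes{approximately} precise; with both in hand, \eqref{eq:cond1} and \eqref{eq:cond2} follow as approximate equalities in the stated asymptotic regimes.
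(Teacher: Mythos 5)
Your reduction of both identities to the approximate independence of $\mf y[n]$ and $\mf y[q]$ is where the argument goes astray, because that premise is both stronger than what the paper assumes and inconsistent with how the lemma is used. The paper's own proof is a one-liner: \eqref{eq:cond1} is a direct restatement of Assumption~3 (QN samples i.i.d.\ across time, hence $\mf z[n]$ independent of $\mf z[q]$), and \eqref{eq:cond2} follows from the memorylessness of the quantizer (your observation $\mf z[t]=\phi(\mf y[t])$, which is correct and is also the paper's) combined with that same assumption. Crucially, the paper never asserts that $\mf y[n]$ and $\mf y[q]$ are approximately independent; on the contrary, Lemma~\ref{lem: xxx3} computes $\E\{\mf y[n]\mid \mf y[q]\}=\alpha\rho\,\bar{\mf c}^{\HH}[q]\bar{\mf c}[n]\,\mf y[q]/\sigma_y^2\neq 0$, and the term $f_3$ in \eqref{eq:asymp_a3} retains exactly this pilot-induced input cross-correlation, which survives into the final noise variance as $2\alpha\rho\xi^2\sigma_z^2\gamma(T-K)$. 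If your premise held to the working order of the analysis, those quantities would be approximately zero and the rate formula of Theorem~\ref{thm: MRC_Rate_QSP} would collapse to something different. Taken at face value, your proof proves too much.

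The reason one may treat the QN as i.i.d.\ while keeping the input correlated is quantitative and is the content of the references behind Assumption~3: under the arcsine law the QN cross-correlation is a higher-order (cubic and beyond) function of the normalized input correlation coefficient, whereas the input correlation itself is first order; hence the QN decorrelates much faster than the input in the low-SNR/large-$K$ regimes, and the i.i.d.\ approximation for $\mf z$ is far better than input independence. Your argument only delivers QN independence at the same (first) order as input independence, so it misses the actual mechanism. For \eqref{eq:cond2} in particular, the cleaner route needs no input independence at all: the Bussgang choice of $\mf B$ already makes $z_m[n]$ \emph{exactly} uncorrelated with $y_m[q]$ for all $n,q$ by \eqref{qnoise_inp_crosscorrelation}, and the only approximation is upgrading uncorrelatedness to independence. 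Finally, your concentration claim $\sum_k c_k[n]c_k^{\ast}[q]=O(\sqrt{K})$ for randomly drawn Fourier rows is plausible but unproven and, as argued above, unnecessary for the lemma.
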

	\begin{proof}
		Equation~\eqref{eq:cond1} follows because $\mf z[n]$ is independent of $\mf z[q]$. Since QN is a function of the unquantized signal and the i.i.d. assumption on QN implies that QN at time $n$ is a function of the unquantized signal received at time $n$ only and independent of any signal received at other time instants, hence~\eqref{eq:cond2} follows.
	\end{proof}
	\begin{lemma} \label{lem: xxx3}
		For any $n \ne q$ we have
		\begin{equation}
			\E \left \{ \mf y[n] | \mf y[q]\right \} = \frac{\alpha \rho \bar {\mf c}^{\HH}[q] \bar {\mf c}[n]}{\sigma_y^2}  \mf y[q]
		\end{equation}
	\end{lemma}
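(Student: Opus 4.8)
The plan is to exploit the (approximate) joint Gaussianity of $\mf y[n]$ and $\mf y[q]$ granted by Assumption~2 and reduce the claim to the standard conditional-mean identity for jointly Gaussian zero-mean vectors, $\E\{\mf y[n]\mid\mf y[q]\}=\mf R_{nq}\mf R_{qq}^{-1}\mf y[q]$, where $\mf R_{nq}\triangleq\E\{\mf y[n]\mf y^{\HH}[q]\}$ and $\mf R_{qq}\triangleq\E\{\mf y[q]\mf y^{\HH}[q]\}$. Everything then comes down to computing these two correlation matrices and showing each is a scalar multiple of $\mf I_M$.

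First I would compute $\mf R_{nq}$ entrywise, writing $y_m[t]=\sqrt{\alpha\rho}\,\bar{\mf c}^{\TT}[t]\bar{\mf h}_m+\sqrt{\bar\alpha\rho}\,\bar{\mf s}^{\TT}[t]\bar{\mf h}_m+w_m[t]$ and expanding $\E\{y_m[n]y_{m'}^{\ast}[q]\}$. All cross terms involving the zero-mean data symbols or noise drop out, leaving channel--channel, data--data, and noise contributions. For $m\ne m'$ the channels $\bar{\mf h}_m,\bar{\mf h}_{m'}$ are independent and the noises $w_m,w_{m'}$ independent, so every off-diagonal entry vanishes and $\mf R_{nq}$ is diagonal. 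For $m=m'$ the hypothesis $n\ne q$ is what does the work: the data symbols are i.i.d.\ across time (Assumption~1), so $\E\{s_k[n]s_{k'}^{\ast}[q]\}=0$ removes the data term, and $\E\{w_m[n]w_m^{\ast}[q]\}=0$ removes the noise term. Only the deterministic pilot term survives, and using $\E\{h_{mk}h_{mk'}^{\ast}\}=\delta_{kk'}$ it equals $\alpha\rho\sum_k c_k[n]c_k^{\ast}[q]=\alpha\rho\,\bar{\mf c}^{\HH}[q]\bar{\mf c}[n]$. Hence $\mf R_{nq}=\alpha\rho\,\bar{\mf c}^{\HH}[q]\bar{\mf c}[n]\,\mf I_M$.

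Next I would evaluate $\mf R_{qq}$ in the same way. Independence across antennas again makes it diagonal, and each diagonal entry is $\E\{|y_m[q]|^2\}=\alpha\rho\sum_k|c_k[q]|^2+\bar\alpha\rho\sum_k\E\{|s_k[q]|^2\}+1$. Since the Fourier pilot entries obey $|c_k[q]|^2=1$ and $\E\{|s_k[q]|^2\}=1$, this collapses to $\rho K+1=\sigma_y^2$, giving $\mf R_{qq}=\sigma_y^2\mf I_M$. Substituting both matrices into the conditional-mean identity and cancelling $\mf I_M$ yields $\E\{\mf y[n]\mid\mf y[q]\}=\big(\alpha\rho\,\bar{\mf c}^{\HH}[q]\bar{\mf c}[n]/\sigma_y^2\big)\mf y[q]$, which is the stated result.

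The main obstacle is conceptual rather than computational: because each $y_m[t]$ is built from products $h_{mk}s_k[t]$ that are genuinely non-Gaussian, the linear conditional-mean identity is only licensed through the CLT-based Gaussian approximation of Assumption~2, and this is where the argument is least rigorous. The one algebraic point that must be handled with care is the role of $n\ne q$: it is precisely this restriction that annihilates the data and noise cross-correlations, and for $n=q$ an extra data contribution would appear and the clean proportionality to $\mf y[q]$ would fail.
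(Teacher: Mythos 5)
Your proposal is correct and follows essentially the same route as the paper: Assumption~2 (joint Gaussianity) licenses the standard conditional-mean/MMSE identity, and the coefficient is the cross-covariance $\alpha\rho\,\bar{\mf c}^{\HH}[q]\bar{\mf c}[n]$ (the $(n,q)$ off-diagonal entry of the covariance already given in the paper's equation for $\Sigma_{\bar{\mf y}_m}$) divided by $\sigma_y^2$. The only difference is that you spell out the entrywise covariance computation that the paper leaves implicit by citing standard MMSE.
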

	\begin{proof}
		The result follows from applying standard MMSE~\cite{sayed2011adaptive}. This is because the received unquantized signal vector is jointly Gaussian (Assumption 2), the conditional expectation $\E \left \{ \mf y[n] | \mf y[q]\right \}$ corresponds to the MMSE of $\mf y[n] $ given the observation $\mf y[q] $. 
	\end{proof}
	
	Now, by decomposing the output of MRC~\eqref{eq:process_sig} into two parts; signal and uncorrelated effective noise,~\eqref{eq:process_sig} can be rewritten as
	\begin{align}\label{eq:MRC_decomposition}
		\hat s_{k}[t] &= a_{k} s_{k}[t]+  \epsilon_{k} [t],
	\end{align}
	where $a_{k}$ is a deterministic constant and $\epsilon_{k}[t]$ is the effective non-Gaussian noise which is uncorrelated with $s_{k}[t]$. From the orthogonality principle, we have
	\begin{equation}\label{eq:def_ak}
		a_{k} = \E  \{s_{k}^{\ast}[t] \hat s_{k}[t]\}.
	\end{equation}
	and hence the variance of noise $\epsilon_{k} [t]$ is given by $\sigma_{\epsilon_k}^2 =  \E \{|\hat {s}_{k}[t]|^2\}-| \E \{\hat {s}_{k}[t]\}|^2-|a_{k}[t]|^2$. For the sake of comparison with the unquantized system (UQSP), we define
	\begin{equation}
		\tilde{\sigma}_{\epsilon_k}^2 =  \frac{\sigma_{\epsilon_k}^2}{|a_{k}[t]|^2}
		\label{eq:normalized_var_eff_noise}
	\end{equation}
	as the \textit{normalized} variance of effective noise. Since $s_{k}[t]$ is Gaussian, a lower bound on channel capacity is obtained by replacing $\epsilon_{k} [t]$ by a Gaussian noise with the same variance $\sigma_{\epsilon_k}^2$. Thus the lower bound is given by
	\begin{equation} \label{eq:Stnd_lowerbound}
		R_{\text{single-cell}}^{\text{LB}}=\log\left( 1 + \frac{1}{\tilde{\sigma}_{\epsilon_k}^2}\right).
	\end{equation}
	
	\emph{1) Calculation of $a_k$}: Expanding~\eqref{eq:process_sig} yields
	\begin{IEEEeqnarray}{rCl}\label{eq:skt_expansion_0}\nonumber
		\hat s_k[t] &=& \frac{\xi}{M} \sum_{n=1}^{T} c_{k} [n] \Big( \gamma \mf  y^{\HH}[n] \mf  y[t]+\mf  z^{\HH}[n]  \mf  z[t]\\
		&& \quad +\>   \sqrt{\gamma} \mf  y^{\HH}[n]  \mf  z[t] + \sqrt{\gamma}   \mf  z^{\HH}[n]  \mf  y[t] \Big).
	\end{IEEEeqnarray}
	Using~\eqref{eq:skt_expansion_0} in~\eqref{eq:def_ak} gives
	\begin{align} \label{eq:a_kt_evaluation}  \nonumber
		a_{k} &\approx \frac{\xi}{M} \sum_{n=1}^{T} c_{k} [n]  \gamma \mf  \E \left\{  \mf  y^{\HH}[n]  \mf  y[t] s_{k}^{\ast}[t] \right\}\\
		&= \sqrt{\bar \alpha \alpha} \rho  \gamma \xi T,
	\end{align}
	where the approximation in the first line follows from using Lemma~\ref{lem: xxx1}.
	
	\emph{2) Calculation of $\tilde{\sigma}_{\epsilon_k}^2$}: Using~\eqref{eq:skt_expansion_0}, we can verify that
	\begin{align} \label{eq:Expected_shat} \nonumber
		\E\{\hat s_k[t]\}&= \xi (\alpha \rho\gamma  T + \bar \alpha \rho \gamma   K +  \gamma +\sigma_z^2   )  c_{k}[t]\\
		&=\sqrt{\alpha \rho \gamma } c_{k}[t],
	\end{align}
	where in~\eqref{eq:Expected_shat} we have made use of Lemma~\ref{lemm:ch_QN_Correlation}, the assumption that QN samples are i.i.d. each with zero-mean and variance $\sigma_z^2$ (Assumption 3), and~\eqref{eq: noise_covariance}. From~\eqref{eq:process_sig}, $\E \{|\hat {s}_{k}[t]|^2\}$ can be written as
	\begin{align} \nonumber
		\label{eq:hat s_kt_squared}
		&\E\{|\hat {s}_{k} [t] |^2\} =  \frac{\xi^2}{{M}^2}  \mf {c}_k^{\TT}  \E \left \{\mf  R^{\HH}  \mf  r [t] \mf  r [t]^{\HH} \mf  R \right \}\mf {c}_k^{\ast} \\
		&=\xi^2 f(t,t)  + {\xi^2}  \sum_{\substack{ n,q=1\\ (n,q)\ne (t,t)}}^{T}  c_{k} [n] c_{k}^{\ast} [q] f(n,q),
	\end{align}
	where $f(n,q) \triangleq  { \E  \left \{{ \mf  r[n]^{\HH}  \mf  r[t]  \mf  r[t]^{\HH}  \mf  r[q] }\right \}}/M^2$. It should be noted that when $n=q=t$,  $f(t,t) =1$, however, we don't replace $f(t,t)$ by 1 in~\eqref{eq:hat s_kt_squared} and keep it in its general form in order to specialize the result for the unquantized system with $\mf r [t]=\mf y[t]$.
	
	Using the definition of quantized signal, we can express $f(n,q)$  in terms of unquantized signal and QN as follows:
	\begin{align}
		\label{eq:zeta_fn} \nonumber
		f(n&,q) = \frac{1}{M^2} \Big (f_1 +f_2 + 2 \Re \{ f_3 \} + f_4 + f_5+ 2 \Re \{f_6\}  \\
		&+\> 2 \Re \{ f_7 \} + 2 \Re \{ f_8 \}+2 \Re \{ f_9 \}+2 \Re \{ f_{10} \} \Big ),
	\end{align}
	where the terms $f_1-f_{10}$ are given by
	\begin{subequations}
		\begin{align}
			f_1 &=\gamma ^2  \E \left \{ \mf {y}^{\HH}[n]  \mf {y} [t]    \mf {y}^{\HH} [t]    \mf {y} [q] \right \}  \label{eq:a_1}\\
			f_2&=   \E \left \{  \mf {z}^{\HH} [n]    \mf {z} [t]    \mf {z}^{\HH} [t]    \mf {z} [q] \right \}   \\
			f_3&= \gamma   \E \left \{ \mf {y}^{\HH} [n]    \mf {y} [t]    \mf {z}^{\HH} [t]    \mf {z} [q] \right \}  \\
			f_4 &=\gamma   \E \left \{  \mf {y}^{\HH} [n]    \mf {z} [t]     \mf {z} [t]  ^{\HH}   \mf {y} [q] \right \}  \\
			f_{5}&=\gamma   \E \left \{  \mf {z}^{\HH}  [n]   \mf {y} [t]    \mf {y}^{\HH} [t]    \mf {z} [q] \right \}   \\
			f_6&=\gamma ^{\frac{3}{2}}  \E \left \{  \mf {y}^{\HH} [n]    \mf {z} [t]     \mf {y}^{\HH} [t]    \mf {y} [q] \right \} \\
			f_7&= \gamma ^{\frac{3}{2}}  \E \left \{  \mf {y}^{\HH} [n]    \mf {y} [t]    \mf {y}^{\HH} [t]    \mf {z} [q]  \right \}   \\
			f_{8}&=\gamma    \E \left \{  \mf {y}^{\HH} [n]    \mf {z} [t]    \mf {y}^{\HH} [t]    \mf {z} [q] \right \}  \\
			f_{9}&={\gamma }^{\frac{1}{2}}  \E \left \{  \mf {y}^{\HH} [n]    \mf {z} [t]    \mf {z}^{\HH} [t]   \mf {z} [q] \right \}   \\
			f_{10}&= {\gamma }^{\frac{1}{2}}   \E \left \{   \mf {z}^{\HH} [n]    \mf {y} [t]    \mf {z}^{\HH} [t]    \mf {z} [q] \right \}.   \label{eq:a_16}
		\end{align}
	\end{subequations}
	
	It is clear that the evaluation of $f_2-f_{10}$ when $(n,q)\ne (t,t)$ is challenging since $\mf  r[n], \mf  r [t] $ and $ \mf  r [q] $ are not independent, in general. Therefore, in the following we shall make use of Assumption 3 in Sec.~\ref{sec: Assumption} alongside the asymptotic analysis to obtain a closed-form yet a good approximation for $\tilde{\sigma}_{\epsilon_k}^2$. 
	
	Thus, making use of Lemmas~\ref{lem: xxx0}-~\ref{lem: xxx3}, we can show that
	\begin{equation}\label{eq:mean_a_2}
		\frac{f_2}{M}\approx \left\{ \begin{array}{l}
			\sigma_z^4 ,\quad  n=q \ne t\\
			0,\quad \text{else}
		\end{array} \right.
	\end{equation}
	\begin{equation} \label{eq:asymp_a3}
		\frac{f_3}{M^2}  \approx \left\{ \begin{array}{l}
			{\alpha \rho \gamma \sigma_z^2 \bar {\mf c}^{\HH}[t] \bar {\mf c}[n]},\quad q=t,n\ne t \\
			0,\quad \text{else}
		\end{array} \right.
	\end{equation}
	\begin{equation} \label{eq:asymp_a56}
		\frac{f_4}{M}=\frac{f_5}{M}   \approx \left\{ \begin{array}{l}
			{\gamma \sigma_z^2} \sigma_y^2,\quad n=q \ne t\\
			0,\quad \text{else}
		\end{array} \right.
	\end{equation}
	\begin{equation}\label{eq:aymp_a7_16}
		{f_6}={f_7}= \cdots ={f_{10}} \approx 0.
	\end{equation}
	Combining~\eqref{eq:a_1},~\eqref{eq:aymp_a7_16}-~\eqref{eq:mean_a_2},~\eqref{eq:zeta_fn} and~\eqref{eq:hat s_kt_squared} gives
	\begin{align}\label{eq:xxxxxxxxxx}\nonumber
		\E \{|\hat {s}_{k} [t] |^2\} &\approx
		\frac{\xi^2}{M^2} f(t,t) +\frac{\xi^2 \gamma^2}{M^2} \mu_1 + {2 \alpha \rho \xi^2 \sigma_z^2 \gamma} (T-K) \\
		&{}+ \frac{\xi^2 \sigma_z^4}{M} (T-1) + \frac{2 \xi^2 \gamma \sigma_z^2 \sigma_y^2 }{M}(T-1)
	\end{align}
	where $\mu_1 = \E \{ \mf c_k^{\TT} \mf Y^{\HH} \mf y[t] \mf y^{\HH}[t] \mf Y \mf c_k^{\ast}\} -M(M+1)\sigma_y^4 \label{sub:E1}$.
	In~\eqref{eq:xxxxxxxxxx}, the second term is due to $f_1$, third to $f_3$, fourth to $f_2$ and fifth to $f_4 \& f_5$.
	\begin{figure*}[!btp]
		\normalsize
		\begin{IEEEeqnarray}{rCl} \nonumber
			\mu_1 &=& \alpha ^2 \rho ^2 M T^2 (K+M) + \bar {\alpha}^2 \rho ^2 K M (K M+K T+M T+1)  + \bar \alpha \alpha  \rho ^2 K M T (K+M)  + \bar \alpha \alpha \rho ^2 M T^2 (K+M)\\ \nonumber
			&& +\> 2 \bar \alpha \rho K M^2 + 2 \bar \alpha \alpha  \rho ^2 M T (K M+1) + 2 \bar \alpha \rho K M T+\alpha  \rho K M T + 2 \alpha  \rho M^2 T + \alpha \rho M T^2+M (M+T) \\
			&&-\>M (M+1) (K \rho +1)^2.
			\label{eq:mu1_value}
		\end{IEEEeqnarray}
	\end{figure*}
	\begin{figure*}[!btp]
		\normalsize
		\begin{IEEEeqnarray}{rCl}  \nonumber
			\tilde{\sigma}_{\epsilon_k}^2 &=&\biggl( M^2 f(t,t) - \gamma ^2 M \Big(\rho ^2 \left(K^2 (M-\bar \alpha T+1)-K \left(\bar \alpha M T+\alpha  \left(\alpha +T^2-2\right)+1\right)-2 \bar \alpha \alpha  T\right)+\rho  (2 K (M-T+1)\\
			&&+\>\alpha  T (K-T))+M-T+1\Big)+ M (T-M-1) \left(2 \gamma( K \rho+1 )+\sigma _z^2\right)\sigma _z^2 \biggl) \Big / \Big({\alpha  \bar{\alpha}  \rho ^2 \gamma^{ 2} M^2 T^2}\Big)
			\label{eq:noise_var_QSP}
		\end{IEEEeqnarray}
		\hrulefill
	\end{figure*}
	After some mathematical manipulations, we can verify that $\mu_1$ is given by~\eqref{eq:mu1_value} which is shown at the top of page~\pageref{eq:mu1_value}.
	
	Substituting~\eqref{eq:mu1_value} in~\eqref{eq:xxxxxxxxxx} and combining the result with~\eqref{eq:Expected_shat},~\eqref{eq:def_ak} and~\eqref{eq:normalized_var_eff_noise} yields the closed-form expression for $\tilde{\sigma}_{\epsilon_k}^2$ given in~\eqref{eq:noise_var_QSP}, which is shown on the upper half of page~\pageref{eq:noise_var_QSP}. Finally, substituting~\eqref{eq:noise_var_QSP} with $f(t,t) = 1$ in~\eqref{eq:Stnd_lowerbound} yields~\eqref{eq: MRC_Rate_QSP}. This completes the proof.
	
	\section{Proof of Theorem 2}
	\label{app: proof_thm2}
	In this section, we prove Theorem~\ref{thm: MRC_Rate_QSP_multiplecell}. We begin by using the decomposition in~\eqref{eq:MRC_decomposition}. Redefining $a_k, s_k[t]$ and $\epsilon_{k}[t]$ as $a_{0 k}, s_{0 k}[t]$ and $\epsilon_{0 k}[t]$, respectively,~\eqref{eq:multicell_MRC_out} can thus be written as $ \hat{s}_{0 k}[t]= a_{0 k} s_{0 k}[t]+ \epsilon_{0 k}[t]$. Let
	\begin{equation}
		\tilde{\sigma}_{\epsilon_{0 k}}^2 = \frac {{\E \{|\hat {s}_{0 k} [t] |^2\}- |\E\{\hat{s}_{0 k}[t] \}|^2}}{| a_{0 k}  |^2}- 1
		\label{eq:norm_noise_multicell}
	\end{equation}
	
	be the \textit{normalized} variance of effective noise at the output of MRC and hence the lower bound on achievable rate is
	\begin{align}  \nonumber
		\label{eq:avg_rate_multicell}
		R_{\text{multicell}}^{\text{LB}} &= \E \left\{ \log \left( 1 + {1}/{ \tilde{\sigma}_{\epsilon_{0 k}}^2} \right) \right \} \\
		&\ge \log \left( 1 + {1}/{\E \{\tilde{\sigma}_{\epsilon_{0 k}}^2\}} \right),
	\end{align}
	where the expectation is taken w.r.t. the large-scale fading and the second line follows from applying the Jensen's inequality on the convex function $\log \left(1 + {1}/{ \tilde{\sigma}_{\epsilon_{0 k}}^2} \right)$ w.r.t. the random variable $\tilde{\sigma}_{\epsilon_{0 k}}^2$.
	
	Thus our task is to calculate~\eqref{eq:norm_noise_multicell}.  Following the same lines of proof in Appendix~\ref{app: proof_thm1}, it is easy to show that $a_{0 k}$ and $\E\{\hat{s}_{0 k}[t] \}$ are given by:
	\begin{subequations}
		\begin{align}
			a_{0 k}  &\approx  \xi^{\prime} \gamma^{\prime} \rho T \sqrt{\bar \alpha \alpha } \label{eq:a0k}\\
			\E\{\hat{s}_{0 k}[t] \}&=\sqrt{\alpha \rho   \gamma^{\prime} } c_{0 k}[t],
		\end{align}
		\label{eq:a0k_expected}
	\end{subequations}
	where the approximation in~\eqref{eq:a0k} is due to Lemma~\ref{lem: xxx1}.
	
	Using~\eqref{eq:hat s_kt_squared} and after redefining all parameters according to the signal model~\eqref{eq:QBB_multiplecell}, $\E \{|\hat {s}_{0 k} [t] |^2\}$ can be written as
	\begin{align} \nonumber
		\label{eq:hat s_kt_squared_multicell}
		\E\{|\hat {s}_{0 k} [t] |^2\} &= {\xi^{\prime 2}} f^{\prime}(t,t)  \\
		&+ \xi^{ \prime 2} \sum_{\substack{ n,q=1 \\ (n,q)\ne (t,t)}}^{T}  c_{0 k} [n] c_{0 k}^{\ast} [q] f^{\prime}(n,q),
	\end{align}
	where $f^{\prime}(n,q) \triangleq { \E  \left \{{ \mf  r_0[{n}]^{\HH}  \mf  r_0[t]  \mf  r_0[t]^{\HH}  \mf  r_0[q] }\right \}}/M^2$ and $\xi^{ \prime }$ is given in~\eqref{eq:xi_mutiplecell}. Expanding $f^{\prime}(n,q)$ as we have done previously in~\eqref{eq:zeta_fn}, we get
	\begin{IEEEeqnarray}{rCl} \nonumber
		f^{\prime}(n&,&q)= \frac{1}{M^2} \Big (f^{\prime}_1 +f^{\prime}_2 + 2 \Re \{ f^{\prime}_3 \} + f^{\prime}_4 + f^{\prime}_5+ 2 \Re \{f^{\prime}_6\}\\
		&&+\> 2 \Re \{f^{\prime}_7 \} + 2 \Re \{ f^{\prime}_8 \}+2 \Re \{ f^{\prime}_9 \}+2 \Re \{ f^{\prime}_{10} \} \Big ),
		\label{eq:zeta_fn_multicell}
	\end{IEEEeqnarray}
	where we redefine $\{f_j\}$ as $\{f^{\prime}_j\}$ with replacing all quantized signal and QN vectors in~\eqref{eq:a_1}-~\eqref{eq:a_16} according to~\eqref{eq:QBB_multiplecell}. Following the same lines of proof as previously,~\eqref{eq:hat s_kt_squared_multicell} can be readily written as
	\begin{IEEEeqnarray}{rCl} \nonumber
		\E \{|\hat {s}_{0 k} && [t] |^2\}\approx
		{\xi^{\prime 2}} f^{\prime}(t,t) + \frac{\xi^{\prime 2} \gamma^{\prime 2} }{M^2} \mu^{\prime}_1 + {2 \alpha \rho {\xi^{\prime 2}} \sigma_z^2 \gamma^{\prime} } (T-\kappa_0)\\
		&&{}+ \frac{{\xi^{\prime 2}} \sigma_z^4}{M} (T-1) + \frac{2 {\xi^{\prime 2}}  \gamma^{\prime}  \sigma_z^2 (\kappa_0 \rho +1)^2 }{M}(T-1),
		\label{eq:xxxxxxxxxx_multicell}
	\end{IEEEeqnarray}
	where $\mu^{\prime}_1$ is given by
	\begin{IEEEeqnarray}{rCl} \nonumber
		\mu^{\prime}_1 &=& \E \{ \mf c_{0 k}^{\TT} \mf Y_0^{\HH} \mf y_0[t] \mf y_0^{\HH}[t] \mf Y_0 \mf c_{0 k}^{\ast}\} -M(M+1)\sigma_{y_0}^4 \\ \nonumber
		&=&M \Big (  \Big(2 \alpha  M \rho ^2 T-2 \alpha  M \rho -2 \alpha ^2 M \rho ^2 T-2 \rho +\alpha  \rho ^2 T^2\\ \nonumber
		&&-\alpha  \rho  T+2 \rho  T\Big)\kappa _0  + \Big(\alpha ^2 M \rho ^2-2 \alpha  M \rho ^2-\rho ^2-\alpha  \rho ^2 T\\ \nonumber
		&&+\rho ^2 T\Big)\kappa _0^2 +\bar \alpha \rho ^2 (MT+\bar \alpha)\kappa _1 +\alpha  M \rho ^2 T^2+2 \alpha  M \rho  T\\
		&&+\alpha  \rho  T^2-2 \alpha ^2 \rho ^2 T+2 \alpha  \rho ^2 T+T-1
		\Big).
		\label{sub:E1_multiplecell}
	\end{IEEEeqnarray}
	
	Combining~\eqref{sub:E1_multiplecell},~\eqref{eq:xxxxxxxxxx_multicell},~\eqref{eq:a0k_expected} and~\eqref{eq:norm_noise_multicell},  and after some mathematical manipulations and plugging all related equations in~\eqref{eq:norm_noise_multicell}, we obtain~\eqref{eq:normalized_noise_multiplecell}, which is shown on the upper half of page~\pageref{eq:normalized_noise_multiplecell}. Substituting~\eqref{eq:gamma_muliplecell},~\eqref{eq:xi_mutiplecell}, and $f^{\prime} (t,t) =1$ in~\eqref{eq:normalized_noise_multiplecell}, we obtain the final expression of $\tilde{\sigma}_{\epsilon_{0 k}}^2$ under the 1-bit quantization per single realization of large-scale fading. Finally, the lower bound~\eqref{eq: MRC_Rate_QSP_multiplecell} follows from~\eqref{eq:avg_rate_multicell}. This completes the proof.
	\begin{figure*}[!btp]
		\normalsize
		\begin{IEEEeqnarray}{rCl} \nonumber
			\tilde{\sigma}_{\epsilon_{0 k}}^2   &&=  \biggl(M^2 f^{\prime} (t,t)+ \gamma^{\prime 2} M \rho  \left(\kappa _0 \left(\kappa _0 \rho  (\bar{\alpha}   T -M-1)-2 M+T (\alpha +2 \bar{\alpha}  +\alpha  \rho  T)-2\right)+\bar{\alpha}   \rho  (\bar{\alpha}  +M T)\kappa _1 \right) \\
			&&+\> \gamma^{\prime 2} M (-M+T (\alpha  \rho  (2 \bar{\alpha}   \rho +T)+1)-1)+ M(T-M - 1) \left(2 \gamma^{\prime}  (\kappa _0 \rho +1) +\sigma _z^2\right) \sigma _z^2 \biggl) \Big / \Big({\alpha  \bar{\alpha}   \gamma^{\prime 2} M^2 \rho ^2 T^2}\Big)
			\label{eq:normalized_noise_multiplecell}
		\end{IEEEeqnarray}
		\hrulefill
		\vspace*{0pt}
	\end{figure*}
	
\end{appendices}

%
\bibliographystyle{IEEEtran}
\bibliography{IEEEabrv,QSP_Ref}

\end{document}